\documentclass[a4paper,USenglish,numberwithinsect]{lipics}
\usepackage{graphicx}
\usepackage{thmtools}
\usepackage{enumerate}
\usepackage{tikz}
\usetikzlibrary{arrows,positioning,automata}
\usepackage{multirow}
\usepackage{hyperref}

\theoremstyle{plain}
\newtheorem{lemma-indexed-by-theorem}[theorem]{Lemma}
\theoremstyle{definition}
\newtheorem{example-indexed-by-theorem}[theorem]{Example}

\newcommand{\ceil}[1]{\left\lceil #1\right\rceil}
\newcommand{\suchthat}{\:|\:}
\newcommand{\N}{\mathbb{N}}
\newcommand{\Q}{\mathbb{Q}}
\renewcommand{\P}{\mathsf{P}}
\newcommand{\sharpP}{\ensuremath{\mathsf{\# P}}}
\newcommand{\NP}{\ensuremath{\mathsf{NP}}}

\newcommand{\ApproxMC}{\ensuremath{\mathsf{ApproxMC}}}
\newcommand{\UniGen}{\ensuremath{\mathsf{UniGen}}}

\newcommand{\improvs}{I}
\newcommand{\valids}{A}
\newcommand{\creative}{admissible}
\newcommand{\wref}{w_{\text{ref}}}


\begin{document}

\title{Control Improvisation\footnote{This is a preliminary version of the paper, which has since been substantially extended: please see Fremont et al.~2017 \cite{jacm-preprint}.}}
\author{Daniel J. Fremont}
\author{Alexandre Donz\'e}
\author{Sanjit A. Seshia}
\author{David Wessel}
\affil{University of California, Berkeley\\\{\texttt{dfremont}, \texttt{donze}, \texttt{sseshia}\}\texttt{@berkeley.edu}}
\Copyright{Daniel J. Fremont, Alexandre Donz\'e, Sanjit A. Seshia, and David Wessel}
\subjclass{F.4.3 Formal Languages, G.3 Probability and Statistics, F.2.2 Nonnumerical Algorithms and Problems}
\keywords{finite automata, random sampling, Boolean satisfiability, testing, computational music, control theory}

\maketitle

\begin{abstract} 
We formalize and analyze a new automata-theoretic
problem termed {\em control improvisation}. Given an automaton, the
problem is to produce an \emph{improviser}, a probabilistic algorithm
that randomly generates words in its language, subject to two
additional constraints: the satisfaction of an \emph{admissibility}
predicate, and the exhibition of a specified amount of
randomness. Control improvisation
has multiple applications, including, for example, 
generating musical improvisations
that satisfy rhythmic and melodic constraints, where admissibility is
determined by some bounded divergence from a reference melody. We
analyze the complexity of the control improvisation problem, giving
cases where it is efficiently solvable and cases where it is
\sharpP-hard or undecidable. We also show how symbolic techniques
based on Boolean satisfiability (SAT) solvers can be used to approximately solve some of the
intractable cases.
\end{abstract} 

\section{Introduction}
\label{sec:intro}

We introduce and formally characterize a new automata-theoretic problem termed {\em control
improvisation}. Given an automaton, the problem is to produce an \emph{improviser}, a probabilistic
algorithm that randomly generates words in the language of the automaton, subject to two additional
constraints: each generated word must satisfy an \emph{admissibility} predicate, and the improviser
must exhibit a specified amount of randomness.

The original motivation for this problem arose from a topic known as {\em machine improvisation of
music}~\cite{rowe-2001}.  Here, the goal is to create algorithms which can generate variations of a
reference melody like those commonly improvised by human performers, for example in jazz.  Such an
algorithm should have three key properties.  First, the melodies it generates should conform to rhythmic
and melodic constraints typifying the music style (e.g. in jazz, the melodies should follow the
harmonic conventions of that genre). Second, the algorithm should be sufficiently randomized that
running it several times produces a variety of different improvisations. Finally, the generated
melodies should be actual variations on the reference melody, neither reproducing it exactly nor being
so different as to be unrecognizable. In previous work~\cite{donze-icmc14}, we identified these
properties in an initial definition of the control improvisation problem, and applied it to the
generation of monophonic (solo) melodies over a given jazz song harmonization\footnote{Examples of
improvised melodies can be found at the following URL:\\
\url{http://www.eecs.berkeley.edu/~donze/impro_page.html}.}.
 
These three properties of a generation algorithm are not specific to music.  Consider
\emph{black-box fuzz testing} \cite{fuzzing-book}, which produces many inputs to a program hoping to
trigger a bug.  Often, constraints are imposed on the generated inputs, e.g. in \emph{generative}
fuzz testing approaches which enforce an appropriate format so that the input is not rejected
immediately by a parser.  Also common are \emph{mutational} approaches which guide the generation
process with a set of real-world seed inputs, generating only inputs which are variations of those
in the set. And of course, fuzzers use randomness to ensure that a variety of inputs are tried.
Thus we see that the inputs generated in fuzz testing have the same general requirements as music
improvisations: satisfying a set of constraints, being appropriately similar/dissimilar to a
reference, and being sufficiently diverse.

We propose control improvisation as a precisely-defined theoretical problem capturing these
requirements, which are common not just to the two examples above but to many other generation
problems.  Potential applications also include home automation mimicking typical occupant behavior (e.g.,
randomized lighting control obeying time-of-day constraints and limits on energy
usage~\cite{lee-personal13}) and randomized variants of the supervisory control problem
\cite{lafortune06}, where a controller keeps the behavior of a system within a safe operating region
(the language of an automaton) while adding diversity to its behavior via randomness.  A typical example
of the latter is surveillance: the path of a patrolling robot should satisfy various constraints
(e.g. not running into obstacles) and be similar to a predefined route, but incorporate some
randomness so that its location is not too predictable \cite{lafortune-personal15}.

Our focus, in this paper, is on the {\em theoretical characterization of
control improvisation}. Specifically, we give a precise theoretical
definition and a rigorous characterization of the complexity of the
control improvisation problem under various conditions on the inputs
to the problem. While the problem is distinct from any other we have
encountered in the literature, our methods are closely connected to
prior work on random sampling from the languages of automata and
grammars \cite{hickey-cohen,denise2006,sharpNFA}, and sampling from
the satisfying assignments of a Boolean formula~\cite{unigen}.
Probabilistic programming techniques~\cite{probprog} could be used
for sampling under constraints, but the present methods cannot be used
to construct improvisers meeting our definition.

In summary, this paper makes the following novel contributions:
\begin{itemize}
\item Formal definitions of the notions of control improvisation (CI) and a
polynomial-time improvisation scheme (Sec.~\ref{sec:prelim});
\item A theoretical characterization of the conditions under which
improvisers exist (Sec.~\ref{section:existence});
\item A polynomial-time improvisation scheme for a practical class of
CI instances, involving finite-memory admissibility predicates
(Sec.~\ref{section:finite-memory}); 
\item \sharpP-hardness and undecidability results for more general classes of
the problem (Sec.~\ref{section:complex-automata}); 
\item A symbolic approach based on Boolean satisfiability (SAT)
solving that is useful in the case when the automata
are finite-state but too large to represent explicitly (Sec.~\ref{section:symbolic}).
\end{itemize}
We conclude in Sec.~\ref{section:conclusion} with a
synopsis of results and directions for future work.
For lack of space, we include only selected proofs and proof sketches in the main body
of the paper; complete details may be found in the Appendix.

\section{Background and Problem Definition}
\label{sec:prelim}

In this section, we first provide some background on a previous
automata-theoretic method for music improvisation based on a data
structure called the {\em factor oracle}. We then provide a formal
definition of the control improvisation problem while explaining the
choices made in this definition.

\subsection{Factor Oracles}

An effective and practical approach to machine improvisation of music (used for example in the prominent OMax system \cite{omax}) is based on a data structure
called the factor oracle \cite{AssayagD04,Cleophas03constructingfactor}. Given a word
$\wref$ of length $N$ that is a symbolic encoding of a reference melody, a factor oracle $F$ is an
automaton constructed from $\wref$ with the following key properties: $F$ has $N+1$ states, all
accepting, chained linearly with direct transitions labelled with the letters in $\wref$, and with
potentially additional forward and backward transitions. Figure \ref{figure:factor-oracle} depicts
$F$ for $\wref = bbac$.  A word $w$ accepted by $F$ consists of concatenated ``factors'' of $\wref$,
and its dissimilarity with $\wref$ is correlated with the number of non-direct transitions. By
assigning a small probability $\alpha$ to non-direct transitions, $F$ becomes a generative Markov
model with tunable ``divergence'' from $\wref$. In order to impose more musical structure on the
generated words, our previous work~\cite{donze-icmc14} additionally requires that improvisations
satisfy rules encoded as deterministic finite automata, by taking the product of the generative
Markov model and the DFAs. While this approach is heuristic and lacks any formal guarantees, it has
the basic elements common to machine improvisation schemes: (i) it involves randomly generating
strings from a formal language typically encoded as an automaton, (ii) it enforces diversity in the
generated strings, and (iii) it includes a requirement on which strings are admissible based on their
divergence from a reference string. The definition we propose below captures these elements in a
rigorous theoretical manner, suitable for further analysis.  In Sec.~\ref{section:finite-memory}, we
revisit the factor oracle, sketching how the notion of divergence from $\wref$ that it represents can be
encoded in our formalism.
 
 {
\setlength{\intextsep}{8pt}
\setlength{\belowcaptionskip}{-5pt}
\setlength{\abovecaptionskip}{0pt}
\begin{figure}[tb]
\centering
\begin{tikzpicture}[initial text=, transform shape, scale=0.8]

 \node[accepting, state, initial] (s0) {$s_0$}; 
 \node[accepting, state, right= of s0] (s1) {$s_1$}; 
 \node[accepting, state, right= of s1] (s2) {$s_2$};
 \node[accepting, state, right= of s2] (s3) {$s_3$}; 
 \node[accepting, state, right= of s3] (s4) {$s_4$};

 \path[->] 
 (s0) edge node [above] {$b$} (s1)
 (s1) edge node [above] {$b$} (s2)    
 (s2) edge node [above] {$a$} (s3)
 (s3) edge node [above] {$c$} (s4) 
 (s0) edge [bend left=40] node [above] {$a$} (s3)
 (s0) edge [bend left=50] node [above] {$c$} (s4)
 (s1) edge [bend left] node [above] {$a$} (s3)
 (s1) edge [bend left] node [above] {$\epsilon$} (s0) 
 (s2) edge [bend left ] node [above] {$\epsilon$} (s1) 
 (s3) edge [bend left] node [above] {$\epsilon$} (s0) 
 (s4) edge [bend left] node [above] {$\epsilon$} (s0); 

\end{tikzpicture}
\caption{Factor oracle constructed from the word $\wref = bbac$.}
\label{figure:factor-oracle}
\end{figure}
}
 
\subsection{Problem Definition}

We abbreviate deterministic and nondeterministic finite
automata as DFAs and NFAs respectively. 
We use the standard definition of probabilistic finite automata from
\cite{rabin-pfas}, where a string is accepted iff it causes the
automaton to reach an accepting state with probability greater than a
specified \emph{cut-point} $p \in [0, 1)$. 
We call a probabilistic finite automaton, together with a choice of
cut-point so that its language is definite, a PFA. 
We write $\Pr[ f(X) \suchthat X \leftarrow D ]$ for the probability of event
$f(X)$ given that the random variable $X$ is drawn from the
distribution $D$. 

\begin{definition}
An \emph{improvisation automaton} is a finite automaton (DFA, NFA, or
PFA) $\mathcal{\improvs}$ over a finite alphabet $\Sigma$. An
\emph{improvisation} is any word $w \in L(\mathcal{\improvs})$, and
$\improvs = L(\mathcal{\improvs})$ is the set of all improvisations. 
\label{defn:improv-auto}
\end{definition}

\begin{definition}
An \emph{admissibility predicate} is a computable predicate $\alpha :
\Sigma^* \rightarrow \{0,1\}$. An improvisation $w \in \improvs$ is
\emph{\creative} if $\alpha(w) = 1$. We write $\valids$ for the set of
all {\creative} improvisations. 
\label{defn:admiss}
\end{definition}

\begin{subparagraph}{Running Example.}
Our concepts will be illustrated with a simple 
example. Our aim is to produce variations of the binary string $s =
001$ of length 3, subject to the constraint that there cannot be two
consecutive 1s. So $\Sigma = \{0,1\}$, and 
$\mathcal{\improvs}$ is a DFA which accepts all length-3 strings 
that do not have two 1s in a row. To ensure that our
variations are similar to $s$, we let our admissibility predicate
$\alpha(w)$ be 1 if the Hamming distance between $w$ and $s$ is at
most 1, and 0 otherwise. Then the improvisations are the strings
$000$, $001$, $010$, $100$, and $101$, of which $000$, $001$, and
$101$ are admissible. \\
\end{subparagraph}

Intuitively, an improviser samples from the set of improvisations according to some
distribution. But what requirements must one impose on this distribution?  Since we want a variety
of improvisations, we require that each one is generated with probability at most some bound
$\rho$. By choosing a small value of $\rho$ we can thus ensure that many different improvisations
can be generated, and that no single one is output too frequently. Other constraints are possible,
e.g. requiring that every improvisation have nonzero probability, but we view this as too
restrictive: if there are a large number of possible improvisations, it should be acceptable for an
improviser to generate many but not all of them.
Another possibility would be to ensure variety by imposing some minimum distance between the improvisations.
This could be reasonable in a setting (such as music) where there is a natural metric on the space of improvisations, but we choose to keep our setting general and not assume such a metric.
Finally, we require our generated improvisation to
be admissible with probability at least $1-\epsilon$ for some specified $\epsilon$. When the
admissibility predicate encodes a notion of similarity to a reference string, for example, this allows us to
require that our improvisations usually be similar to the reference. Combining these requirements,
we obtain our definitions of an acceptable distribution over improvisations and thus of an
improviser:

\begin{definition}
Given $\mathcal{C} = (\mathcal{\improvs}, \alpha, \epsilon, \rho)$
with $\mathcal{\improvs}$ and $\alpha$ as in
Definitions~\ref{defn:improv-auto} and~\ref{defn:admiss}, $\epsilon \in [0,1]
\cap \Q$ an error probability, and $\rho \in (0,1] \cap \Q$ a
probability 
bound, a distribution $D : \Sigma^*
\rightarrow [0,1]$ with support $S$ is an
\emph{$(\epsilon,\rho)$-improvising distribution} if:
\begin{itemize}
\item $S \subseteq \improvs$
\item $\forall w \in S, D(w) \le \rho$
\item $\Pr[ w \in \valids \suchthat w \leftarrow D ] \ge 1 - \epsilon$
\end{itemize}
If there is an $(\epsilon,\rho)$-improvising distribution, we say that
$\mathcal{C}$ is $(\epsilon,\rho)$-\emph{feasible} (or simply {\em feasible}). An 
\emph{$(\epsilon,\rho)$-improviser} (or simply {\em improviser}) for a feasible $\mathcal{C}$ 
is an expected finite-time
probabilistic algorithm generating strings in $\Sigma^*$
whose output distribution (on empty
input) is an $(\epsilon,\rho)$-improvising distribution.
\label{defn:feasible}
\end{definition}

To summarize, if $\mathcal{C} = (\mathcal{\improvs}, \alpha, \epsilon, \rho)$ is feasible, there
exists a distribution satisfying the requirements in Definition~\ref{defn:feasible}, and an
improviser is a probabilistic algorithm for sampling from one.

\begin{subparagraph}{Running Example.}
For our running example, $\mathcal{C} = (\mathcal{\improvs}, \alpha,
0, 1/4)$ is not feasible since $\epsilon=0$ 
means we can only generate admissible improvisations, and since there
are only 3 of those we cannot possibly give them all probability at
most $1/4$. Increasing $\rho$ to $1/3$ would make $\mathcal{C}$
feasible. Increasing $\epsilon$ to $1/4$ would also work, allowing us
to return an inadmissible improvisation $1/4$ of the time:
an algorithm uniformly sampling from $\{000, 001, 101, 100\}$ would be
an improviser for $(\mathcal{\improvs}, \alpha, 1/4, 1/4)$. 
\end{subparagraph}

\begin{definition}
Given $\mathcal{C} = (\mathcal{\improvs}, \alpha, \epsilon, \rho)$,
the \emph{control improvisation (CI)} problem is to decide whether
$\mathcal{C}$ is feasible, and if so to generate an improviser for
$\mathcal{C}$. 
\end{definition}

Ideally, we would like an efficient algorithm to solve the CI problem. Furthermore,
the improvisers our algorithm produces should themselves be efficient, in the sense
that their runtimes are polynomial in the size of the original CI
instance. This leads to our last definition: 

\begin{definition}
A \emph{polynomial-time improvisation scheme} for a class
$\mathcal{P}$ of CI instances is a polynomial-time algorithm $S$
with the following properties: 
\begin{itemize}
\item for any $\mathcal{C} \in \mathcal{P}$, if $\mathcal{C}$ is
feasible then $S(\mathcal{C})$ is an improviser for $\mathcal{C}$, and
otherwise $S(\mathcal{C}) = \bot$  
\item there is a polynomial $p:\mathbb{R}\rightarrow \mathbb{R}$
such that if $G = S(\mathcal{C}) \ne \bot$, then $G$
has expected runtime at most $p(|\mathcal{C}|)$. 
\end{itemize}
\end{definition}

A polynomial-time improvisation scheme for a class of CI instances is
an efficient, uniform way to solve the control improvisation problem
for that class. In Sections \ref{section:finite-memory} and
\ref{section:complex-automata} we will investigate which classes have
such improvisation schemes. 

\section{Existence of Improvisers} \label{section:existence}

It turns out that the feasibility of an improvisation problem is completely determined by the sizes of $\improvs$ and $\valids$:

\begin{theorem} \label{theorem:feasibility}
For any $\mathcal{C} = (\mathcal{\improvs}, \alpha, \epsilon, \rho)$, the following are equivalent:
\begin{enumerate}[\quad (a)]
\item $\mathcal{C}$ is feasible.
\item $|\improvs| \ge 1/\rho$ and $|\valids| \ge (1-\epsilon)/\rho$.
\item There is an improviser for $\mathcal{C}$.
\end{enumerate}
\end{theorem}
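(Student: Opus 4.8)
The plan is to prove the cycle of implications (c) $\Rightarrow$ (a) $\Rightarrow$ (b) $\Rightarrow$ (c). The first two are short; the content is in (b) $\Rightarrow$ (c).

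The implication (c) $\Rightarrow$ (a) is immediate from the definitions: an improviser terminates in expected finite time, hence with probability one it produces an output, and its output distribution is by definition an $(\epsilon,\rho)$-improvising distribution, so $\mathcal{C}$ is feasible. For (a) $\Rightarrow$ (b), I would take an $(\epsilon,\rho)$-improvising distribution $D$ with support $S \subseteq \improvs$ and just count. Since $D(w) \le \rho$ for all $w \in S$ and $\sum_{w \in S} D(w) = 1$, we get $|S| \ge 1/\rho$, hence $|\improvs| \ge |S| \ge 1/\rho$. Likewise $1-\epsilon \le \Pr[ w \in \valids \suchthat w \leftarrow D ] = \sum_{w \in S \cap \valids} D(w) \le \rho\,|S \cap \valids|$, so $|\valids| \ge |S \cap \valids| \ge (1-\epsilon)/\rho$. (Both inequalities hold trivially, reading the right-hand side as $\infty$, when the relevant set is infinite.)

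The real work is (b) $\Rightarrow$ (c): exhibiting an actual improviser. Set $n = \ceil{1/\rho}$ and $k = \ceil{(1-\epsilon)/\rho}$; since $|\improvs|$ and $|\valids|$ are (possibly infinite) nonnegative integers, hypothesis (b) gives $|\improvs| \ge n$ and $|\valids| \ge k$, and plainly $0 \le k \le n$. Choose distinct improvisations $w_1,\dots,w_n \in \improvs$ with $w_1,\dots,w_k \in \valids$, which the cardinality bounds permit. A uniform distribution on $w_1,\dots,w_n$ need not work (when $\epsilon$ is large, too little of the mass lands on admissible words), so instead I place total mass $s := \max\!\big(1-\epsilon,\; 1-(n-k)\rho\big)$ uniformly over $w_1,\dots,w_k$ and the remaining mass $1-s$ uniformly over $w_{k+1},\dots,w_n$ (omitting whichever group is empty). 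A short verification using only $n\rho \ge 1$, $k\rho \ge 1-\epsilon$, and $0 \le s \le 1$ shows that every word receives probability at most $\rho$, the probabilities sum to $1$, and the admissible mass is at least $s \ge 1-\epsilon$; so this is an $(\epsilon,\rho)$-improvising distribution with finite support and rational weights. Finally, sampling from a fixed finitely-supported distribution with rational probabilities can be carried out by a probabilistic algorithm using fair coin flips in expected finite time (e.g.\ by rejection sampling against the common denominator of the weights), so an improviser realizing this distribution exists, giving (c). (One may further note that the $w_i$ are computable by enumerating $\Sigma^*$, since $L(\mathcal{\improvs})$ is decidable and $\alpha$ computable, so the improviser is effectively constructible from $\mathcal{C}$ — though mere existence is all (c) requires.)

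I expect the only subtle point to be the balancing of the two constraints in the (b) $\Rightarrow$ (c) construction: the bound $|\improvs| \ge 1/\rho$ forces the mass to be spread over roughly $1/\rho$ distinct words, while $|\valids| \ge (1-\epsilon)/\rho$ only guarantees enough \emph{admissible} words to carry mass $1-\epsilon$, and when $\epsilon$ is near $1$ these pull in opposite directions; over-weighting the admissible words via the choice of $s$ reconciles them. Everything else is routine bookkeeping.
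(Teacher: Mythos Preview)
Your proof is correct and follows essentially the same approach as the paper: (a)$\Rightarrow$(b) by the same counting argument, (c)$\Rightarrow$(a) trivially, and (b)$\Rightarrow$(c) by constructing an explicit finite-support distribution over $\ceil{1/\rho}$ improvisations with the first $\ceil{(1-\epsilon)/\rho}$ chosen from $\valids$. The only cosmetic difference is that the paper splits (b)$\Rightarrow$(c) into the cases $k=n$ and $k<n$ and assigns probability exactly $\rho$ to each admissible word in the latter, whereas you unify both via $s=\max(1-\epsilon,\,1-(n-k)\rho)$; both choices yield valid $(\epsilon,\rho)$-improvising distributions.
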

\begin{proof}
\begin{description}
\item[(a)$\Rightarrow$(b):] Suppose $D$ is an $(\epsilon,\rho)$-improvising distribution with support $S$. Then $\rho |S| = \sum_{w \in S} \rho \ge \sum_{w \in S} D(w) = 1$, so $|\improvs| \ge |S| \ge 1/\rho$. We also have $\rho |S \cap \valids| = \sum_{w \in S \cap \valids} \rho \ge \sum_{w \in S \cap \valids} D(w) = \Pr[ w \in \valids \suchthat w \leftarrow D] \ge 1 - \epsilon$, so $|\valids| \ge |S \cap \valids| \ge (1-\epsilon)/\rho$.

\item[(b)$\Rightarrow$(c):] Defining $N = \ceil{(1-\epsilon)/\rho}$, we have $|\valids| \ge N$. If $N \ge 1/\rho$, then there is a subset $S \subseteq \valids$ with $|S| = \ceil{1/\rho}$. Since $1/\ceil{1/\rho} \le \rho$, the uniform distribution on $S$ is a $(0,\rho)$-improvising distribution. Since this distribution has finite support and rational probabilities, there is an expected finite-time probabilistic algorithm sampling from it, and this is a $(0,\rho)$-improviser. If instead $N < 1/\rho$, defining $M = \ceil{1/\rho} - N$ we have $M \ge 1$. Since $|\improvs| \ge \ceil{1/\rho} = N + M$, there are disjoint subsets $S \subseteq \valids$ and $T \subseteq \improvs$ with $|S| = N$ and $|T| = M$. Let $D$ be the distribution on $S \cup T$ where each element of $S$ has probability $\rho$ and each element of $T$ has probability $(1 - \rho N) / M = (1 - \rho N) / \ceil{(1/\rho) - N} = (1 - \rho N) / \ceil{(1 - \rho N) / \rho} \le \rho$. Then $\Pr[ w \in \valids | w \leftarrow D ] \ge \rho N \ge 1 - \epsilon$, so $D$ is a $(\epsilon,\rho)$-improvising distribution. As above there is an expected finite-time probabilistic algorithm sampling from $D$, and this is an $(\epsilon,\rho)$-improviser.

\item[(c)$\Rightarrow$(a):] Immediate. \qedhere
\end{description} 
\end{proof}

\begin{remark}
In fact, whenever $\mathcal{C}$ is feasible, the construction in the proof of Theorem \ref{theorem:feasibility} gives an improviser which works in nearly the most trivial possible way: it has two finite lists $S$ and $T$, flips a (biased) coin to decide which list to use, and then returns an element of that list uniformly at random.
\end{remark}

A consequence of this characterization is that when there are infinitely-many admissible improvisations, there is an improviser with zero error probability:

\begin{corollary}
If $\valids$ is infinite, $(\mathcal{\improvs}, \alpha, 0, \rho)$ is feasible for any $\rho \in (0,1] \cap \Q$.
\end{corollary}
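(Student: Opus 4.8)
The plan is to derive this immediately from the equivalence (a)$\Leftrightarrow$(b) in Theorem~\ref{theorem:feasibility}, so essentially no new work is needed. First I would observe that $\valids \subseteq \improvs$ by Definition~\ref{defn:admiss}, so if $\valids$ is infinite then $\improvs$ is infinite as well. Hence both cardinalities $|\improvs|$ and $|\valids|$ are infinite, and in particular each is at least any finite real number.

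Next I would instantiate condition (b) of Theorem~\ref{theorem:feasibility} with $\epsilon = 0$: it asks for $|\improvs| \ge 1/\rho$ and $|\valids| \ge (1-0)/\rho = 1/\rho$. Since $\rho \in (0,1] \cap \Q$, the quantity $1/\rho$ is a finite rational, and both inequalities hold trivially because the left-hand sides are infinite. Therefore (b) is satisfied, and by the theorem $(\mathcal{\improvs}, \alpha, 0, \rho)$ is feasible.

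There is no real obstacle here; the only thing to be careful about is the trivial set-theoretic point that admissibility is defined as a restriction of the improvisation set (so $\valids$ infinite forces $\improvs$ infinite), which is what lets us discharge the first inequality in (b) as well as the second. I would also remark, in passing, that by the construction in the proof of (b)$\Rightarrow$(c) one in fact gets a very simple improviser — a uniform distribution on any fixed size-$\ceil{1/\rho}$ subset of $\valids$ — with zero error probability, matching the remark following Theorem~\ref{theorem:feasibility}.
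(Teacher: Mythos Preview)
Your proposal is correct and is exactly the intended argument: the paper states this corollary as an immediate consequence of the characterization in Theorem~\ref{theorem:feasibility} without giving a separate proof, and your verification of condition~(b) (using $\valids \subseteq \improvs$ to get $|\improvs|=\infty$) is the straightforward way to spell it out. The closing remark about the uniform improviser on a size-$\ceil{1/\rho}$ subset of $\valids$ is also accurate and in the spirit of the paper's own Remark following Theorem~\ref{theorem:feasibility}.
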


In addition to giving conditions for feasibility, Theorem \ref{theorem:feasibility} yields an algorithm which is guaranteed to find an improviser for any feasible CI problem.

\begin{corollary}
If $\mathcal{C}$ is feasible, an improviser for $\mathcal{C}$ may be found by an effective procedure.
\end{corollary}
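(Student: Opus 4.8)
The plan is to reduce everything to Theorem~\ref{theorem:feasibility}. Its (b)$\Rightarrow$(c) direction is not merely an existence statement but an explicit \emph{construction}: given any two finite sets $S \subseteq \valids$ and $T \subseteq \improvs$ of the sizes prescribed there, it writes down in closed form the biased-coin-plus-uniform sampler that is an $(\epsilon,\rho)$-improviser (and this sampler, having finite support and rational probabilities, is automatically an expected finite-time algorithm). So it suffices to give an effective procedure that, under the hypothesis that $\mathcal{C}$ is feasible, actually locates such sets. Put $N = \ceil{(1-\epsilon)/\rho}$ and $K = \ceil{1/\rho}$; both are computable from the rational inputs $\epsilon$ and $\rho$.

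First I would enumerate $\Sigma^*$ in length-lexicographic order. For each candidate word $w$ I decide whether $w \in L(\mathcal{\improvs})$ — this is decidable by simulation when $\mathcal{\improvs}$ is a DFA or NFA, and for a PFA it is the definiteness condition built into our notion of a PFA that keeps the language decidable — and I evaluate $\alpha(w)$, which is computable by Definition~\ref{defn:admiss}. While enumerating I maintain the list of improvisations found so far together with its sublist of admissible ones. I halt as soon as the admissible list has reached size $N$ \emph{and} the full list has reached size $K$. Because $\mathcal{C}$ is feasible, Theorem~\ref{theorem:feasibility}(b) gives $|\valids| \ge (1-\epsilon)/\rho$ and $|\improvs| \ge 1/\rho$; since these cardinalities are integers this means $|\valids| \ge N$ and $|\improvs| \ge K$, so both thresholds are eventually met and the loop terminates. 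From the collected lists I then extract $S$ and $T$ exactly as in the proof of the theorem — $K$ admissible words when $N \ge K$, and otherwise $N$ admissible words for $S$ together with $K - N$ further improvisations (which exist, since the full list, containing the $N$ admissible ones as a sublist, has $K = N + (K-N)$ elements) for $T$ — and output the associated sampler, which by that proof is an $(\epsilon,\rho)$-improviser.

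The point to be careful about is that this is an effective procedure \emph{under a promise}: its termination is guaranteed only because feasibility was assumed, and nothing in it would certify non-feasibility — on an infeasible instance one of the two counters would simply never reach its target and the search would never halt. So the corollary should not be read as claiming that feasibility itself is decidable; that separate question is taken up in Sections~\ref{section:finite-memory} and~\ref{section:complex-automata}, where it is efficiently decidable for some input classes and undecidable for others. The only ingredient above whose computability is not completely obvious is the PFA membership test, and I would simply remark that it is exactly the cut-point condition in our definition of a PFA that makes it decidable; with that noted, every step is genuinely computable and their composition is the desired effective procedure.
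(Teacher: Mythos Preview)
Your proof is correct and follows exactly the paper's approach: enumerate $\improvs$ and $\valids$ (both computably enumerable since membership in $L(\mathcal{\improvs})$ is decidable and $\alpha$ is computable) until enough elements are found to carry out the explicit construction from the (b)$\Rightarrow$(c) direction of Theorem~\ref{theorem:feasibility}, with termination guaranteed by feasibility via part~(b). Your write-up is simply a more detailed version of the paper's three-sentence argument, and your remarks about the procedure being only a promise algorithm anticipate the very next corollary; one small aside: decidability of PFA membership for a fixed word $w$ follows already from the acceptance probability being an exactly computable rational (under the standard assumption of rational transition probabilities), not specifically from the ``definiteness'' clause, though this does not affect your argument.
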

\begin{proof}
The sets $\improvs$ and $\valids$ are clearly computably enumerable, since $\alpha$ is computable. We enumerate $\improvs$ and $\valids$ until enough elements are found to perform the construction in Theorem \ref{theorem:feasibility}. Since $\mathcal{C}$ is feasible, the theorem ensures this search will terminate.
\end{proof}

We cannot give an upper bound on the time needed by this algorithm without knowing something about the admissibility predicate $\alpha$. Therefore although as noted in the remark above whenever there are improvisers at all there is one of a nearly-trivial form, actually finding such an improviser could be difficult. In fact, it could be faster to generate an improviser which is \emph{not} of this form, as seen for example in Sec.~\ref{section:finite-memory}.

\begin{corollary}
The set of feasible CI instances is computably enumerable but not computable.
\end{corollary}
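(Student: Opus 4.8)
The plan is to prove the two halves of the statement separately, using the cardinality characterization of feasibility from Theorem~\ref{theorem:feasibility}.

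\textbf{Computable enumerability.} By part~(b) of Theorem~\ref{theorem:feasibility}, an instance $\mathcal{C} = (\mathcal{\improvs}, \alpha, \epsilon, \rho)$ is feasible if and only if $|\improvs| \ge \ceil{1/\rho}$ and $|\valids| \ge \ceil{(1-\epsilon)/\rho}$; replacing the rational thresholds $1/\rho$ and $(1-\epsilon)/\rho$ by their ceilings is harmless, since $|\improvs|$ and $|\valids|$ are non-negative integers or $\infty$, and the ceilings are computable from the rationals $\rho,\epsilon$. Now $\improvs = L(\mathcal{\improvs})$ is computably enumerable (the languages of DFAs, NFAs, and PFAs with isolated rational cut-point all have decidable membership, hence are c.e.), and $\valids = \improvs \cap \alpha^{-1}(1)$ is c.e. because $\alpha$ is moreover total computable. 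So I would semi-decide feasibility by enumerating $\improvs$ and $\valids$ in parallel, tracking the number of distinct elements found in each, and accepting as soon as both counts reach their respective thresholds; this procedure halts exactly when $\mathcal{C}$ is feasible. (This is essentially the preceding corollary on effective procedures, read off at the level of the decision problem.)

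\textbf{Non-computability.} I would obtain this from a computable many-one reduction from the halting problem. Given a Turing machine $M$, put $\Sigma = \{0\}$, let $\mathcal{\improvs}$ be the one-state DFA with $L(\mathcal{\improvs}) = \{0\}^*$, and let $\alpha_M$ be the predicate that on input $0^n$ simulates $M$ on empty input for $n$ steps and outputs $1$ iff $M$ has halted by then (and outputs $0$ on any input not of the form $0^n$). Then $\alpha_M$ is total computable and is produced effectively from $M$. If $M$ halts after exactly $k$ steps then $\valids = \{0^n : n \ge k\}$ is infinite, and if $M$ never halts then $\valids = \emptyset$. Taking $\epsilon = 0$ and $\rho = 1$, Theorem~\ref{theorem:feasibility} says $\mathcal{C}_M = (\mathcal{\improvs}, \alpha_M, 0, 1)$ is feasible iff $|\improvs| \ge 1$ and $|\valids| \ge 1$; the former always holds and the latter holds iff $M$ halts. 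Hence $M \mapsto \mathcal{C}_M$ reduces the halting problem to CI feasibility, so the feasible instances do not form a computable set. Combining the two parts yields the corollary.

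I do not expect a serious obstacle here: the key conceptual point is that all of the undecidability enters through the admissibility predicate, which Definition~\ref{defn:admiss} permits to be an arbitrary computable predicate, while the automaton and numerical parameters serve only to make $\improvs$ infinite. The one place to be slightly careful is confirming (semi-)decidability of membership in $L(\mathcal{\improvs})$ for the PFA case of Definition~\ref{defn:improv-auto}; for a fixed word this amounts to computing a finite product of rational stochastic matrices and comparing the resulting acceptance probability against the rational cut-point, which is routine.
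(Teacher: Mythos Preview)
Your proof is correct and follows essentially the same line as the paper's: enumerability via dovetailing the enumerations of $\improvs$ and $\valids$ until the thresholds from Theorem~\ref{theorem:feasibility} are met, and undecidability via the freedom in choosing the computable predicate $\alpha$. The only difference is one of explicitness: for non-computability the paper simply remarks that deciding feasibility would decide whether $|\valids| \ge (1-\epsilon)/\rho$ for an arbitrary computable $\alpha$, whereas you spell out a concrete many-one reduction from the halting problem.
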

\begin{proof}
Enumerability follows immediately from the previous Corollary. If checking whether $\mathcal{C}$ is feasible were decidable, then so would be checking if $|\valids| \ge (1-\epsilon)/\rho$, but this is undecidable since $\alpha$ can be an arbitrary computable predicate.
\end{proof}

\section{Finite-Memory Admissibility Predicates} \label{section:finite-memory}

In order to bound the time needed to find an improviser, we must
constrain the admissibility predicate $\alpha$. Perhaps the simplest
type of admissibility predicate is one which can be computed by a DFA,
i.e., one such that there is some DFA $\mathcal{D}$ which accepts a
word $w \in \Sigma^*$ iff $\alpha(w) = 1$. This captures the notion of
a \emph{finite-memory} admissibility predicate, where admissibility of
a word can be determined by scanning the word left-to-right, only
being able to remember a finite number of already-seen symbols. An
example of a finite-memory predicate $\alpha$ is one 
such that $\alpha(w) = 1$ iff each subword of $w$
of a fixed constant length satisfies some condition. By the
pumping lemma, such predicates have the property that continually
repeating some section of a word can produce an infinite family of
improvisations, which could be a disadvantage if looking for
``creative'', non-repetitive improvisations. However, in applications
such as music we impose a maximum length on improvisations, so this is
not an issue. 

\begin{example-indexed-by-theorem}[Factor Oracles]
Recall that one way of measuring the divergence of an improvisation
$w$ generated by the factor oracle $F$ built from a word $\wref$ is by
counting the number of non-direct transitions that $w$ causes $F$ to
take. Since DFAs cannot count without bound, we can use a sliding window of some
finite size $k$. Then our admissibility predicate $\alpha$ can be that
at any point as $F$ processes $w$, the number of the previous $k$
transitions which were non-direct lies in some interval $[\ell,h]$
with $0 \le \ell \le h \le k$. This predicate can be encoded as a DFA
of size $O(|F| \cdot 2^k)$ (see the Appendix for details). The size of
the automaton grows exponentially in the size of the window, but for
small windows it can be reasonable. 
\end{example-indexed-by-theorem}

When the admissibility predicate is finite-memory and the automaton
$\mathcal{\improvs}$ is a DFA, there is an efficient procedure to test
if an improviser exists and synthesize one if so. The construction is
similar to that of Theorem \ref{theorem:feasibility}, but avoids explicit enumeration of all
improvisations to be put in the range of the improviser. To avoid
enumeration we use a classic method of uniformly sampling from the
language of a DFA $\mathcal{D}$ (see for example
\cite{hickey-cohen,denise2006}). The next few lemmas summarize the
results we need, proofs being given in the Appendix for
completeness. The first step is to determine the size of the
language. 

\begin{restatable}{lemma-indexed-by-theorem}{lemmaDFACounting} \label{lemma-dfa-counting}
If $\mathcal{D}$ is a DFA, $|L(\mathcal{D})|$ can be computed in polynomial time.
\end{restatable}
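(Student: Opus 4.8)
The plan is to count accepting paths in the DFA by dynamic programming over path length, exploiting the fact that a DFA is deterministic so that distinct accepted words correspond bijectively to distinct accepting runs. First I would note a subtlety: $L(\mathcal{D})$ may be infinite, so "compute $|L(\mathcal{D})|$" must be read as producing either a nonnegative integer or the symbol $\infty$. So the first step is to decide finiteness: $L(\mathcal{D})$ is infinite iff some state that is both reachable from the start state and co-reachable (can reach an accepting state) lies on a directed cycle. This is a routine graph reachability computation, linear in $|\mathcal{D}|$, after first pruning states that are not reachable or not co-reachable.

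Assuming $L(\mathcal{D})$ is finite, I would then bound the lengths of accepted words: in the pruned automaton (all states reachable and co-reachable) there are no cycles, so every accepting path is simple and hence has length at most $n-1$ where $n$ is the number of states; thus $|L(\mathcal{D})| = \sum_{k=0}^{n-1} c_k$, where $c_k$ is the number of words of length exactly $k$ in $L(\mathcal{D})$. To compute the $c_k$, define $f(q,k)$ to be the number of words of length $k$ that drive the automaton from state $q$ into an accepting state. Then $f(q,0) = 1$ if $q$ is accepting and $0$ otherwise, and $f(q,k) = \sum_{a \in \Sigma} f(\delta(q,a), k-1)$, where determinism is exactly what makes this a sum rather than an overcount. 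Filling this table for $k = 0, 1, \dots, n-1$ and all states, then reading off $c_k = f(q_0, k)$ and summing, gives $|L(\mathcal{D})|$.

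For the complexity claim I would observe that the table has $O(n^2)$ entries, each computed with $O(|\Sigma|)$ additions, so $O(n^2 |\Sigma|)$ arithmetic operations suffice; and each $f(q,k)$ is at most $|\Sigma|^k \le |\Sigma|^n$, so every number manipulated has $O(n \log |\Sigma|)$ bits, making each addition polynomial-time. Hence the whole procedure runs in time polynomial in $|\mathcal{D}|$. I would also remark that if one does not first prune to eliminate cycles, the same recurrence still computes the number of accepted words of each length, but the lengths are no longer bounded; the pruning step is what guarantees termination after $n$ rounds in the finite case (and the separate cycle test handles the infinite case).

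I do not anticipate a serious obstacle here — this is a classical construction. The one point requiring care, and the only thing I would flag as "the hard part," is the bookkeeping around infinite languages and the bit-size argument: one must be explicit that the entries of the DP table can be exponentially large integers but still have polynomially many bits, so that "polynomial time" genuinely holds in the Turing-machine sense and not merely in an arithmetic-operations sense. Everything else is direct.
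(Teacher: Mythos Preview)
Your proposal is correct and follows essentially the same approach as the paper: prune to reachable and co-reachable states, return $\infty$ if a cycle survives, and otherwise count accepting paths by dynamic programming. The only cosmetic difference is that the paper, having observed the pruned automaton is a DAG, computes a single path-count $p_v$ per vertex via a reverse-topological traversal rather than stratifying by length; your version is slightly less efficient but equally polynomial, and your explicit attention to the bit-size of the intermediate integers is a point the paper leaves implicit.
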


Once we know the size of $L(\mathcal{D})$ we can efficiently sample
from it, handling infinite languages by sampling from a finite subset
of a desired size. 

\begin{restatable}{lemma-indexed-by-theorem}{lemmaDFAPumpSamp} \label{lemma-dfa-pump-samp}
There is a polynomial $p(x,y)$ such that for any $N \in \N$ and DFA $\mathcal{D}$ with infinite language, there is a probabilistic algorithm $S$ which uniformly samples from a subset of $L(\mathcal{D})$ of size $N$ in expected time at most $p(|\mathcal{D}|, \log N)$, and which can be constructed in the same time.
\end{restatable}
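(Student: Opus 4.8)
The plan is to reduce sampling from a size-$N$ subset of an infinite DFA language to sampling from bounded-length words, using the classical state-elimination/dynamic-programming count of accepted words of each length together with the pumping lemma to guarantee that enough short words exist. First I would observe that, by the pumping lemma, if $\mathcal{D}$ has $m$ states and an infinite language, then it accepts words of every sufficiently large length in some arithmetic progression; more concretely, one can show there is a length bound $B$, polynomial in $m$ and in $\log N$, such that the number of accepted words of length at most $B$ is at least $N$. The key quantitative point is that once a word longer than $m$ is accepted, a pumpable cycle of length $\le m$ can be inserted arbitrarily many times, so the count of accepted words grows at least like a fixed exponential in the allowed length; hence length $O(m + \log N)$ suffices to exceed $N$ accepted words. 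I would make this precise by letting $a_i = |\{w \in L(\mathcal{D}) : |w| = i\}|$ and choosing $B$ minimal with $\sum_{i \le B} a_i \ge N$, then bounding $B$ via the pumping argument.

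Next I would recall (citing \cite{hickey-cohen,denise2006} and Lemma~\ref{lemma-dfa-counting}'s method) that the quantities $a_i$, and more generally the number $c(q,i)$ of words of length $i$ that lead from state $q$ to an accepting state, satisfy a simple recurrence $c(q,i) = \sum_{a \in \Sigma} c(\delta(q,a), i-1)$ with $c(q,0) = 1$ iff $q$ is accepting; these can all be computed for $i \le B$ in time polynomial in $|\mathcal{D}|$ and $B$, hence polynomial in $|\mathcal{D}|$ and $\log N$. Using these counts I would define the sampling algorithm $S$: it fixes the subset to be the lexicographically-first $N$ words of length $\le B$ (equivalently, all words of length $< B$ together with the first $N - \sum_{i<B} a_i$ words of length exactly $B$ in some canonical order), and samples uniformly from this fixed set by first picking a length according to the appropriate weights and then generating a uniformly random word of that length by walking the DFA, at each step choosing the next symbol $a$ with probability proportional to $c(\delta(q,a), \text{remaining length})$. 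Restricting to the prefix of canonically-first words of length exactly $B$ is handled by an analogous weighted walk that only follows symbol choices consistent with staying within the chosen prefix block; since this is a deterministic lexicographic cutoff it can be tracked with one extra comparison per step. All arithmetic is on integers and rationals of bit-length polynomial in $|\mathcal{D}|$ and $\log N$, so each sample takes expected (indeed worst-case) time $p(|\mathcal{D}|, \log N)$ for a suitable polynomial $p$, and $S$ itself — essentially the table of counts — is constructible in the same time.

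I expect the main obstacle to be the quantitative pumping-lemma bound: showing that length $B = O(|\mathcal{D}| + \log N)$ suffices to accumulate $N$ accepted words, and doing so uniformly in $\mathcal{D}$ so that the final bound is genuinely a fixed polynomial $p(x,y)$. The subtlety is that an infinite regular language need not contain words of \emph{every} length, so one must argue about an arithmetic progression of admissible lengths with common difference at most $m$ and show the counts along it grow geometrically; a clean way is to locate a reachable, co-reachable state on a cycle and count the distinct ways of traversing that cycle $k$ times, which already gives at least $k+1$ (in fact exponentially many, once two distinct cycles or branches are available, but $k+1$ linear growth over lengths up to $B$ already yields $\Omega(B/m)$ words, and a short case analysis upgrades this to the needed bound). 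The remaining pieces — the counting recurrence, the weighted random walk, and the bookkeeping for the lexicographic cutoff at the top length — are routine and mirror the standard DFA-sampling construction, so I would present them briefly and refer to the Appendix for the full details.
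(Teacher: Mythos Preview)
Your approach is quite different from the paper's, and it contains a genuine quantitative gap. You assert that a length bound $B$ polynomial in $|\mathcal{D}|$ and $\log N$ suffices to accumulate $N$ accepted words, claiming that ``the count of accepted words grows at least like a fixed exponential in the allowed length.'' This is false in general: consider the unary DFA accepting $a^*$, which has exactly one word of each length, so $B = N-1$ is required. More generally, infinite regular languages can have only polynomial growth (the so-called bounded or slender languages, finite unions of sets $u_0 v_1^* u_1 \cdots v_k^* u_k$), and for these no polylogarithmic $B$ exists. Your own parenthetical acknowledges that a single cycle gives only $k+1$ words, i.e.\ $\Omega(B/m)$ total, but the promised ``short case analysis'' that ``upgrades this to the needed bound'' cannot exist: linear growth cannot be upgraded to exponential. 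With $B = \Theta(N m)$ your DP table and weighted walk would run in time polynomial in $N$, not $\log N$, so the construction would not meet the lemma's bound.

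The paper sidesteps this entirely with a much simpler construction that does not attempt to sample from the first $N$ words at all. It finds (in time $\mathrm{poly}(|\mathcal{D}|)$) a single pumping decomposition $x,y,z$ with $|x|,|y|,|z| \le |\mathcal{D}|$ and $xy^n z \in L(\mathcal{D})$ for every $n$; the sampler $S$ then just draws $n$ uniformly from $\{0,\dots,N-1\}$ and outputs $xy^n z$. The subset of size $N$ is thus $\{xy^n z : 0 \le n < N\}$, no counting or dynamic programming is needed, and the only dependence on $N$ is in choosing the random integer. If you want to salvage your DP-based route you would have to restrict to languages of exponential growth, which the lemma does not assume; the pumping construction is both simpler and covers the general case directly.
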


\begin{restatable}{lemma-indexed-by-theorem}{lemmaDFAUnifSamp} \label{lemma-dfa-unif-samp}
There is a polynomial $q(x)$ such that for any DFA $\mathcal{D}$ with finite language, there is a probabilistic algorithm $S$ which uniformly samples from $L(\mathcal{D})$ in expected time at most $q(|\mathcal{D}|)$, and which can be constructed in the same time.
\end{restatable}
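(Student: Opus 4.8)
The plan is to use the classical top-down recursive sampler for the language of a DFA, specialized to the finite case, where a bound on word length makes the count tables small. First I would observe that if $\mathcal{D}$ has $n$ states and $L(\mathcal{D})$ is finite, then no accepting run of $\mathcal{D}$ can visit a state twice: a repeated state yields a cycle on an accepting path, and pumping it would produce infinitely many accepted words. Hence every $w \in L(\mathcal{D})$ has length at most $n-1$.

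Next, for each state $q$ and each length $\ell \in \{0, \dots, n-1\}$, let $c(q,\ell)$ be the number of words of length exactly $\ell$ that drive $\mathcal{D}$ from $q$ into an accepting state. These satisfy $c(q,0) = 1$ if $q$ is accepting and $0$ otherwise, and $c(q,\ell) = \sum_{a \in \Sigma} c(\delta(q,a), \ell-1)$, so the whole table of $O(n^2)$ values is filled by dynamic programming; this is essentially the content of Lemma~\ref{lemma-dfa-counting}, and indeed $|L(\mathcal{D})| = \sum_{\ell} c(q_0,\ell)$ for the initial state $q_0$. Each $c(q,\ell)$ is a nonnegative integer bounded by $|\Sigma|^n$, hence of $O(n\log|\Sigma|)$ bits, so the table — which is precisely the data defining $S$ — is computed with a polynomial number of additions of polynomially-sized integers, within a polynomial in $|\mathcal{D}|$.

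The sampler $S$ then proceeds top-down: draw a target length $\ell$ with probability $c(q_0,\ell)/|L(\mathcal{D})|$; then, keeping a current state $q$ (initially $q_0$) and remaining length $m$ (initially $\ell$), repeatedly draw the next letter $a$ with probability $c(\delta(q,a),m-1)/c(q,m)$, emit $a$, set $q \leftarrow \delta(q,a)$, $m \leftarrow m-1$, and halt when $m = 0$. For a word $w = a_1 \cdots a_\ell$ the product of these factors telescopes: the numerator of each factor is the denominator of the next, and the last factor contributes $c(q_\ell,0)/|L(\mathcal{D})| = 1/|L(\mathcal{D})|$ since the run ends accepting. So $S$ outputs each word of $L(\mathcal{D})$ with probability exactly $1/|L(\mathcal{D})|$, i.e. uniformly.

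For the running time, $S$ performs at most $n$ elementary draws, each from a discrete distribution over at most $\max(n,|\Sigma|)$ outcomes whose probabilities are rationals sharing a denominator of $O(n\log|\Sigma|)$ bits; each such draw can be done in expected $O(n\log|\Sigma|)$ fair coin flips by the standard interval method (generate bits of a uniform real in $[0,1)$ until it is pinned into one cumulative-probability subinterval). The point to be careful about — and the only real subtlety — is exactly here: because the probabilities need not be dyadic, this yields an \emph{expected}, not worst-case, time bound, which is, however, precisely what is required of an improviser in Definition~\ref{defn:feasible}. Summing over the $O(n)$ steps gives expected running time polynomial in $|\mathcal{D}|$, with $S$ constructed (its count table computed) within the same bound, establishing the claimed polynomial $q$.
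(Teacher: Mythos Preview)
Your argument is correct, and it is a close variant of the paper's proof rather than an identical one. Both use the classical count-then-sample paradigm, but the decompositions differ. The paper first prunes $\mathcal{D}$ and uses the fact that a finite-language DFA becomes a DAG; it then stores a single count $p_v$ per vertex (the number of accepting paths from $v$) and performs a random walk with edge probabilities $p_v/p_u$, proving uniformity by a short induction in reverse topological order. You instead keep the DFA as is and stratify by remaining length, storing the two-dimensional table $c(q,\ell)$, sampling the length first and then the letters. Your approach avoids the explicit pruning step and is arguably more robust (dead states simply have zero counts), at the price of an $O(n^2)$ table rather than $O(n)$ counts; the paper's approach is slightly leaner in space and in the description of the walk. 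Both are standard and either yields the required polynomial $q$. Your handling of the non-dyadic probabilities and the resulting \emph{expected} (rather than worst-case) running time is also in line with the paper, which simply notes that the rational probabilities have denominators bounded by $|\Sigma|^{|\mathcal{D}|}$.
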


Using these sampling techniques, we have the following:

\begin{theorem} \label{theorem-dfa-scheme}
The class of CI instances $\mathcal{C}$ where $\mathcal{\improvs}$ is
a DFA and $\alpha$ is computable by a DFA has a polynomial-time
improvisation scheme.  
\end{theorem}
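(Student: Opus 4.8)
The plan is to mimic the construction in the proof of Theorem~\ref{theorem:feasibility}, but to replace every step that relies on explicit enumeration of $\improvs$ and $\valids$ with a polynomial-time counting or sampling routine provided by Lemmas~\ref{lemma-dfa-counting}--\ref{lemma-dfa-unif-samp}. First I would build, from the DFA $\mathcal{\improvs}$ and the DFA $\mathcal{D}$ computing $\alpha$, a product DFA $\mathcal{B}$ with $L(\mathcal{B}) = \improvs \cap \valids = \valids$; this has size polynomial in $|\mathcal{C}|$. Applying Lemma~\ref{lemma-dfa-counting} to both $\mathcal{\improvs}$ and $\mathcal{B}$, I compute $|\improvs|$ and $|\valids|$ in polynomial time. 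Note that these cardinalities can be exponentially large (or infinite), so the algorithm must manipulate them in binary; comparing them against the rationals $1/\rho$ and $(1-\epsilon)/\rho$ (whose numerators and denominators are given in the input) is then a polynomial-time arithmetic check. By Theorem~\ref{theorem:feasibility}, $\mathcal{C}$ is feasible iff $|\improvs| \ge 1/\rho$ and $|\valids| \ge (1-\epsilon)/\rho$; if either fails, output $\bot$.

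If $\mathcal{C}$ is feasible, I would reproduce the case analysis of the (b)$\Rightarrow$(c) argument, but now the sets $S \subseteq \valids$ and $T \subseteq \improvs \setminus \valids$ are only described implicitly. Set $N = \ceil{(1-\epsilon)/\rho}$ and $M = \ceil{1/\rho} - \min(N, \ceil{1/\rho})$ exactly as before. The improviser $G$ first flips a biased coin (with the rational bias from the proof) to decide whether to emit a sample from the "$S$-part" or the "$T$-part"; each branch then must produce a uniform sample from a size-$N$ subset of $\valids$, respectively a uniform sample from a size-$M$ subset of $\improvs \setminus \valids$. For the $S$-part: if $|\valids|$ is finite I invoke Lemma~\ref{lemma-dfa-unif-samp} on $\mathcal{B}$ to sample uniformly from all of $\valids$ (which, in the feasible-and-$N=|\valids|$ sub-case, is exactly what the proof's $S$ is); if $|\valids|$ is infinite, or finite but larger than $N$, I invoke Lemma~\ref{lemma-dfa-pump-samp} on $\mathcal{B}$ to sample uniformly from a fixed size-$N$ subset — here $\log N$ is polynomial in $|\mathcal{C}|$ since $N \le (1-\epsilon)/\rho + 1$ and $\rho$ is given as a rational in the input, so the sampler runs in expected polynomial time. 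The $T$-part is the slightly delicate point: I need to sample from $\improvs \setminus \valids$, not from $\improvs$. I would handle this by complementing $\mathcal{D}$ (a DFA, so closed under complement) and taking its product with $\mathcal{\improvs}$ to get a DFA $\mathcal{B}'$ with $L(\mathcal{B}') = \improvs \setminus \valids$; since feasibility with $N < \ceil{1/\rho}$ forces $|\improvs| > |\valids|$, this language is nonempty, and I again apply Lemma~\ref{lemma-dfa-pump-samp} or~\ref{lemma-dfa-unif-samp} to $\mathcal{B}'$ to extract a uniform sample from a size-$M$ subset. To guarantee that repeated calls to $G$ draw the \emph{same} subsets $S$ and $T$ (so that the output distribution genuinely matches the one in Theorem~\ref{theorem:feasibility}), I would fix, once and for all at construction time, a canonical choice of which size-$N$ (resp. size-$M$) subset the samplers target — the samplers of Lemmas~\ref{lemma-dfa-pump-samp} and~\ref{lemma-dfa-unif-samp} are deterministic enough in their construction that this is automatic, but it is worth stating.

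Finally I would verify the two requirements of a polynomial-time improvisation scheme. Correctness of the feasibility decision is exactly Theorem~\ref{theorem:feasibility} applied to the counts computed via Lemma~\ref{lemma-dfa-counting}; correctness of the output distribution follows because the coin-flip-then-uniform-sample structure reproduces the distribution $D$ of the (b)$\Rightarrow$(c) proof verbatim. The running-time bound has two parts: $S(\mathcal{C})$ itself runs the product constructions, the DFA-counting of Lemma~\ref{lemma-dfa-counting}, and the sampler-constructions of Lemmas~\ref{lemma-dfa-pump-samp}/\ref{lemma-dfa-unif-samp}, all polynomial in $|\mathcal{C}|$; and the produced improviser $G$ has expected runtime bounded by $p(|\mathcal{B}|, \log N) + q(|\mathcal{B}'|)$ plus a constant for the coin flip, which is polynomial in $|\mathcal{C}|$ because $|\mathcal{B}|, |\mathcal{B}'| = \mathrm{poly}(|\mathcal{C}|)$ and $\log N = \mathrm{poly}(|\mathcal{C}|)$. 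The main obstacle I anticipate is bookkeeping rather than conceptual: making sure the three regimes of the (b)$\Rightarrow$(c) proof ($N \ge \ceil{1/\rho}$ with $|\valids|$ finite, the same with $|\valids|$ infinite or oversized, and $N < \ceil{1/\rho}$) are each matched to the right sampling lemma, that $\improvs \setminus \valids$ rather than $\improvs$ is sampled in the $T$-branch, and that $\log N$ is controlled by the bit-length of the input $\rho$ so that Lemma~\ref{lemma-dfa-pump-samp}'s bound is genuinely polynomial in $|\mathcal{C}|$.
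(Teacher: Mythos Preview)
Your high-level strategy---build product DFAs for $\valids$ and for $\improvs \setminus \valids$, count via Lemma~\ref{lemma-dfa-counting}, decide feasibility via Theorem~\ref{theorem:feasibility}, then sample via Lemmas~\ref{lemma-dfa-pump-samp} and~\ref{lemma-dfa-unif-samp}---matches the paper's. But your attempt to reproduce the $(b)\Rightarrow(c)$ distribution of Theorem~\ref{theorem:feasibility} verbatim has two genuine gaps.

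The serious one is the claim ``feasibility with $N < \lceil 1/\rho \rceil$ forces $|\improvs| > |\valids|$.'' This is false: with $\epsilon = \rho = 1/2$ one has $N = 1 < 2 = \lceil 1/\rho \rceil$, yet $\improvs = \valids$ with $|\valids| = 2$ is feasible and $\improvs \setminus \valids = \emptyset$, so your $T$-branch has nothing to sample. Even when $\improvs \setminus \valids$ is nonempty it need not have size $\ge M = \lceil 1/\rho \rceil - N$: take $\rho = 1/10$, $\epsilon = 1/2$, $|\valids| = 8$, $|\improvs| = 10$; then $M = 5$ but $|\improvs \setminus \valids| = 2$. The underlying issue is that Theorem~\ref{theorem:feasibility} chooses $T$ disjoint from the particular size-$N$ set $S$, not from all of $\valids$; your switch to $T \subseteq \improvs \setminus \valids$ is only safe when $|\valids|$ itself is below $1/\rho$. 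A second, smaller gap: Lemma~\ref{lemma-dfa-pump-samp} requires an \emph{infinite} language (it works by pumping a cycle), so you cannot invoke it in the ``finite but larger than $N$'' case to extract a size-$N$ subset.

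The paper avoids both problems by casing on $|\valids|$ rather than on $N$. When $|\valids| \ge 1/\rho$ it simply samples uniformly from \emph{all} of $\valids$ (Lemma~\ref{lemma-dfa-unif-samp} if finite, Lemma~\ref{lemma-dfa-pump-samp} with $N = \lceil 1/\rho \rceil$ if infinite), which already yields a $(0,\rho)$-improviser. Only when $(1-\epsilon)/\rho \le |\valids| < 1/\rho$ does it split: run the $\valids$-sampler with probability $\rho|\valids|$ and otherwise sample $\improvs \setminus \valids$. In that regime $|\improvs \setminus \valids| \ge \lceil 1/\rho \rceil - |\valids| \ge 1$ automatically, and uniform sampling over the whole set (rather than a prescribed size-$M$ subset) suffices since $(1 - \rho|\valids|)/|\improvs \setminus \valids| \le \rho$. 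This sidesteps any need to uniformly sample a proper subset of a finite language.
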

\begin{proof}
The proof considers five cases. We first define some notation. 
Let $\mathcal{D}$ denote the DFA giving $\alpha$. Letting
$\mathcal{\valids}$ be the product of $\mathcal{\improvs}$
and $\mathcal{D}$, we have $\valids = L(\mathcal{\valids})$. This
product can be computed in polynomial time since the automata are both
DFAs, and $|\mathcal{\valids}|$ is polynomial in $|\mathcal{C}|$ and
$|\mathcal{D}|$. In some of the cases below we will also use a DFA
$\mathcal{B}$ which is the synchronous product of $\mathcal{\improvs}$
and the complement of $\mathcal{\valids}$. Clearly $L(\mathcal{B}) =
\improvs \setminus \valids$, and the size of $\mathcal{B}$ and the
time needed to construct it are also polynomial in $|\mathcal{C}|$ and
$|\mathcal{D}|$. 

Next we compute $|\valids| = |L(\mathcal{\valids})|$ and $| \improvs |
= |L(\mathcal{\improvs})|$ in polynomial time using Lemma
\ref{lemma-dfa-counting}. There are now several cases (illustrated in Figure
\ref{figure-proof-cases}): 

{
\setlength{\intextsep}{8pt}
\setlength{\belowcaptionskip}{-5pt}
\setlength{\abovecaptionskip}{2pt}
\begin{figure}
\centering
\begin{tikzpicture}[scale=1.3]

\path [fill=gray] (0,0) -- (0,3) -- (3,3) -- (0,0); 
\path [fill=lightgray] (0,0) -- (2,2) -- (2,1) -- (3,1) -- (3,0) -- (0,0); 
\path [fill=yellow] (2,2) -- (3,2) -- (3,1) -- (2,1) -- (2,2); 
\draw [green, fill=green] (2,2) -- (3,3) -- (3,2) -- (2,2); 

\draw (0,0) -- (3,0) node[anchor=north] {\large $\infty$};
\draw (0,0) node[anchor=north east] {\large $0$}
	(1,0) node[anchor=north] {\large $\frac{1-\epsilon}{\rho}$}
	(2,0) node[anchor=north] {\large $\frac{1}{\rho}$};
\draw (1.5, -0.5) node[anchor=north] {\normalsize $| \improvs |$};

\draw (0,0) -- (0,3) node[anchor=east] {\large $\infty$};
\draw (0,1) node[anchor=east] {\large $\frac{1-\epsilon}{\rho}$}
	(0,2) node[anchor=east] {\large $\frac{1}{\rho}$};
\draw (-0.5, 1.5) node[anchor=east] {\normalsize $| \valids |$};

\draw [fill] (0,3) circle (0.04)
	(3,0) circle (0.04);

\draw (0,0) -- (3,3); 
\draw [dashed] (1,0) -- (1,3);
\draw [dashed] (2,0) -- (2,3);
\draw [dashed] (3,0) -- (3,3);
\draw [dashed] (0,1) -- (3,1);
\draw [dashed] (0,2) -- (3,2);
\draw [dashed] (0,3) -- (3,3);

\draw [red, line width=3] (3,1) -- (3,2); 
\draw [blue, fill=blue] (3,3) circle (0.04); 

\node at (1.5,0.5) {\normalsize (E)};
\node at (3,3) [blue, anchor=west] {\normalsize (A)};
\node at (2.7,2.3) {\normalsize (B)};
\node at (3,1.5) [red, anchor=west] {\normalsize (C)};
\node at (2.5,1.5) {\normalsize (D)};

\end{tikzpicture}
\caption{Cases for Theorem \ref{theorem-dfa-scheme}. The dark gray
region cannot occur.}
\label{figure-proof-cases}
\end{figure}
}

\begin{enumerate}[(A)]
\item \label{case:pump-v} $|\valids| = \infty$: Applying Lemma
\ref{lemma-dfa-pump-samp} to $\mathcal{\valids}$ with $N =
\ceil{1/\rho}$, we obtain a probabilistic algorithm $S$ which uniformly samples from a
subset of $L(\mathcal{\valids}) = \valids$ of size
$\ceil{1/\rho}$. Since $1/\ceil{1/\rho} \le \rho$, we have that $S$ is
a $(0,\rho)$-improviser and return it. 

\item \label{case:unif-v} $1/\rho \le |\valids| < \infty$: Applying Lemma \ref{lemma-dfa-unif-samp} to $\mathcal{\valids}$, we obtain a probabilistic algorithm $S$ which uniformly samples from $L(\mathcal{\valids}) = \valids$. Since $1/|\valids| \le \rho$, we have that $S$ is a $(0,\rho)$-improviser and return it.

\item \label{case:pump-i-unif-v} $(1-\epsilon)/\rho \le |\valids| < 1/\rho$ and $| \improvs | = \infty$: Applying Lemma \ref{lemma-dfa-unif-samp} to $\mathcal{\valids}$ we obtain $S$ as in the previous case. Defining $M = \ceil{1/\rho} - |\valids|$, we have $\infty = |L(\mathcal{B})| > M \ge 1$. Applying Lemma \ref{lemma-dfa-pump-samp} to $\mathcal{B}$ with $N = M$ yields a probabilistic algorithm $S'$ which uniformly samples from a subset of $L(\mathcal{B}) = \improvs \setminus \valids$ of size $M$. Let $G$ be a probabilistic algorithm which with probability $\rho |\valids|$ executes $S$, and otherwise executes $S'$. Then since $L(\mathcal{\valids}) = \valids$ and $L(\mathcal{B}) = \improvs \setminus \valids$ are disjoint, every word generated by $G$ has probability either $(\rho |\valids|) / |\valids| = \rho$ (if it is in $\valids$) or $(1 - \rho |\valids|) / M = (1 - \rho |\valids|) / \ceil{(1 - \rho |\valids|) / \rho} \le \rho$ (if it is in $\improvs \setminus \valids$). Also, $G$ outputs a member of $\valids$ with probability $\rho |\valids| \ge 1 - \epsilon$, so $G$ is an $(\epsilon,\rho)$-improviser and we return it.

\item \label{case:unif-i-unif-v} $(1-\epsilon)/\rho \le |\valids| < 1/\rho \le | \improvs | < \infty$: As in the previous case, except obtaining $S'$ by applying Lemma \ref{lemma-dfa-unif-samp} to $\mathcal{B}$. Since $| \improvs | \ge \ceil{1/\rho}$, we have $|L(\mathcal{B})| = | \improvs \setminus \valids | \ge M$ and so $G$ as constructed above is an $(\epsilon, \rho)$-improviser.

\item \label{case:infeasible} $| \improvs | < 1/\rho$ or $|\valids| < (1-\epsilon) / \rho$: By Theorem \ref{theorem:feasibility}, $\mathcal{C}$ is not feasible, so we return $\bot$.
\end{enumerate}

This procedure takes time polynomial in $|\mathcal{\improvs}|$, $|\mathcal{D}|$, and $\log(1/\rho)$, so it is polynomial-time. Also, a fixed polynomial in these quantities bounds the expected runtime of the generated improviser, so the procedure is a polynomial-time improvisation scheme.
\end{proof}

\begin{subparagraph}{Running Example.}
Recall that for our running example $\mathcal{C} = (\mathcal{\improvs}, \alpha, 1/4, 1/4)$, we have $\improvs = \{000, 001, 010, 100, 101\}$ and $\valids = \{000, 001, 101\}$. 
Since $|\valids| = 3$ and $|\improvs| = 5$, we are in case (\ref{case:unif-i-unif-v}) of Theorem \ref{theorem-dfa-scheme}.
So our scheme uses Lemma \ref{lemma-dfa-unif-samp} to obtain $S$ and $S'$ uniformly sampling from $\valids$ and $\improvs \setminus \valids = \{010, 100\}$ respectively.
It returns a probabilistic algorithm $G$ that executes $S$ with probability $\rho |\valids| = 3/4$ and otherwise executes $S'$.
So $G$ returns $000$, $001$, and $101$ with probability $1/4$ each, and $010$ and $100$ with probability $1/8$ each.
The output distribution of $G$ satisfies our conditions, so it is an improviser for $\mathcal{C}$.
\end{subparagraph}

\section{More Complex Automata} \label{section:complex-automata}

While counting the language of a DFA is easy, in the case of an NFA it
is much more difficult, and so there are unlikely to be
polynomial-time improvisation schemes for more complex automata. Let
$\mathcal{N}_1$ and $\mathcal{N}_2$ be the classes of CI instances
where $\mathcal{\improvs}$ or $\alpha$ respectively are given by an
NFA, and the other is given by a DFA. Then denoting by $\mathcal{N}$ 
either of these classes, we have (deferring full proofs from this section to the Appendix): 

\begin{restatable}{theorem}{theoremNFAHardness} \label{theorem:nfa-hardness}
Determining whether $\mathcal{C} \in \mathcal{N}$ is feasible is $\sharpP$-hard.
\end{restatable}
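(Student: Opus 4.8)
The plan is to reduce from a known $\sharpP$-hard counting problem via Theorem~\ref{theorem:feasibility}, which tells us that feasibility is entirely controlled by the two cardinalities $|\improvs|$ and $|\valids|$. Since an NFA appears on one side of the instance, the natural target is the problem of counting the number of words of a given length accepted by an NFA (equivalently, the number of accepting paths, once we suitably manage ambiguity), which is well known to be $\sharpP$-hard; a standard route is to encode $\#\SAT$ into an NFA over a length-$n$ input whose accepting computations correspond to satisfying assignments of a given CNF formula. I would fix the alphabet to $\{0,1\}$ and, given a formula $\phi$ on $n$ variables, build in polynomial time an NFA $\mathcal{A}_\phi$ such that $|L(\mathcal{A}_\phi)|$ is exactly the number of satisfying assignments of $\phi$ (restricting attention to strings of length $n$, e.g.\ by intersecting with the DFA accepting $\{0,1\}^n$, which is cheap).

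Next I would turn a single count into a feasibility query. The idea is a form of binary search / threshold trick: by Theorem~\ref{theorem:feasibility}(b), with $\epsilon = 0$ the instance $\mathcal{C} = (\mathcal{A}_\phi, \alpha, 0, \rho)$ is feasible iff $|L(\mathcal{A}_\phi)| \ge 1/\rho$ (taking $\alpha \equiv 1$, computable by a trivial one-state DFA, so that $\valids = \improvs = L(\mathcal{A}_\phi)$ and the condition on $|\valids|$ coincides with the condition on $|\improvs|$). Thus a feasibility oracle lets us decide, for any rational threshold $t$, whether $|L(\mathcal{A}_\phi)| \ge t$, simply by setting $\rho = 1/t$. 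Since $0 \le |L(\mathcal{A}_\phi)| \le 2^n$, polynomially many such threshold queries (a binary search over $\{0,1,\dots,2^n\}$) recover the exact value of $|L(\mathcal{A}_\phi)| = \#\phi$. This is a polynomial-time Turing reduction from $\#\SAT$ to deciding CI feasibility, establishing $\sharpP$-hardness for the class $\mathcal{N}_1$.

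For the class $\mathcal{N}_2$, where instead $\alpha$ is given by an NFA and $\mathcal{A}$ is a DFA, I would take $\mathcal{A}$ to be the DFA accepting all of $\{0,1\}^n$ (or $\Sigma^{\le n}$) and let the NFA for $\alpha$ be $\mathcal{A}_\phi$ as above; then $\improvs = \{0,1\}^n$ has a known, easily computed size while $\valids = L(\mathcal{A}_\phi) = $ the set of satisfying assignments, and the same threshold/binary-search argument applied to the $|\valids| \ge (1-\epsilon)/\rho$ condition (taking $\epsilon$ small enough, say $\epsilon = 0$, but keeping $\rho$ small enough that the $|\improvs|$ condition is automatically met) decides $|\valids| \ge t$ and hence computes $\#\phi$. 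So both $\mathcal{N}_1$ and $\mathcal{N}_2$, and therefore $\mathcal{N}$, yield $\sharpP$-hardness.

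The main obstacle I anticipate is the standard subtlety in NFA counting: $|L(\mathcal{A}_\phi)|$ counts distinct accepted words, not accepting paths, so the NFA must be built so that each satisfying assignment is accepted (and each non-satisfying assignment rejected) rather than merely arranging a bijection between satisfying assignments and computation paths — multiplicities would corrupt the count. The clean way around this is to make the length-$n$ input word literally encode the assignment and have the NFA accept a length-$n$ word iff that assignment satisfies $\phi$; the nondeterminism is used only to "guess" which clause witnesses a problem (for the complement) or is handled by a direct construction, and intersecting with the length-$n$ DFA guarantees we are counting exactly the assignments. A secondary point to handle carefully is bit-complexity: the thresholds $1/\rho$ range up to $2^n$, so $\rho$ has polynomially many bits, keeping each query — and the whole reduction — polynomial-time, which I would note explicitly.
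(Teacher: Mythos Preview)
Your proposal is correct and takes essentially the same approach as the paper: turn feasibility into a cardinality threshold via Theorem~\ref{theorem:feasibility} with $\epsilon=0$ and the trivially-true admissibility predicate, then binary-search over $\rho=1/N$ to recover the exact count. The only difference is that the paper reduces directly from the $\sharpP$-complete problem of computing $|L(\mathcal{M}) \cap \Sigma^m|$ for an NFA with $m$ in unary (citing~\cite{sharpNFA}) rather than rebuilding that hardness from $\#\SAT$, which sidesteps the words-versus-paths subtlety you flag; if you keep your version, the cleanest fix is to start from $\#\mathrm{DNF}$ (or take the complement of the CNF-falsifying NFA and subtract from $2^n$), since a polynomial-size NFA accepting exactly the satisfying assignments of an arbitrary CNF is not immediate.
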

\begin{proof}[Proof sketch]
The problem of counting the language of an NFA, which is
$\sharpP$-hard \cite{sharpNFA}, is polynomially reducible to that of
checking if $\mathcal{C} \in \mathcal{N}$ is feasible.
\end{proof}
\begin{remark}
Determining feasibility of $\mathcal{N}$-instances is not a counting
problem, so it is not $\sharpP$-complete, but it is clearly in
$\P^\sharpP$: we construct the automata $\mathcal{\improvs}$ and
$\mathcal{\valids}$ as in Theorem \ref{theorem-dfa-scheme} (now they
can be NFAs), count their languages using $\sharpP$, and apply Theorem
\ref{theorem:feasibility}. 
\end{remark}

\begin{corollary}
If there is a polynomial-time improvisation scheme for $\mathcal{N}$, then $\P = \P^\sharpP$.
\end{corollary}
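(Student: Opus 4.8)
The plan is to derive a contradiction by showing that a polynomial-time improvisation scheme for $\mathcal{N}$ would let us decide, in polynomial time, a problem known to be complete for $\P^{\sharpP}$ (or at least $\sharpP$-hard under Turing reductions), thereby collapsing $\P$ onto $\P^{\sharpP}$. The starting point is Theorem~\ref{theorem:nfa-hardness}: deciding feasibility of an $\mathcal{N}$-instance is $\sharpP$-hard. So first I would recall that a polynomial-time improvisation scheme $S$ for $\mathcal{N}$, when fed any $\mathcal{C} \in \mathcal{N}$, must output $\bot$ exactly when $\mathcal{C}$ is infeasible; in particular, running $S$ and checking whether the result equals $\bot$ is a polynomial-time decision procedure for feasibility of $\mathcal{N}$-instances.

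Next I would combine this observation with the $\sharpP$-hardness of the feasibility problem. The cleanest route is to note that $\P^{\sharpP}$ is closed under polynomial-time Turing reductions, and that counting the language of an NFA is $\sharpP$-complete~\cite{sharpNFA}, hence $\P^{\sharpP} = \P^{\#\mathrm{NFA}}$. The reduction in the proof sketch of Theorem~\ref{theorem:nfa-hardness} shows that an NFA language-counting query can be answered using polynomially many feasibility queries on $\mathcal{N}$-instances (via binary search on the threshold in condition~(b) of Theorem~\ref{theorem:feasibility}, choosing $\rho$ and $\epsilon$ appropriately so that $|\valids| \ge (1-\epsilon)/\rho$ encodes the desired inequality). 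If each such feasibility query is answerable in polynomial time by running $S$, then any $\P^{\sharpP}$ computation can be simulated in polynomial time, giving $\P = \P^{\sharpP}$.

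Concretely, the steps in order: (i) assume a polynomial-time improvisation scheme $S$ for $\mathcal{N}$ exists; (ii) observe $\mathcal{C} \mapsto [S(\mathcal{C}) \ne \bot]$ decides feasibility of $\mathcal{N}$-instances in polynomial time; (iii) invoke the reduction underlying Theorem~\ref{theorem:nfa-hardness}, strengthened to a polynomial-time Turing reduction, from $\#$NFA to the feasibility problem (binary search over the rational parameters to extract $|\valids|$, equivalently the NFA count, bit by bit); (iv) conclude that $\#$NFA, and hence every function in $\sharpP$, is computable in polynomial time, so $\P^{\sharpP} = \P$.

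The main obstacle I anticipate is step~(iii): the proof sketch of Theorem~\ref{theorem:nfa-hardness} only claims a single-query many-one-style reduction showing $\sharpP$-hardness of the \emph{decision} problem, so I must make sure the reduction actually recovers the full count, not just one bit of information. This is handled by the standard trick of issuing polynomially many feasibility queries with varying $\rho$ (or $\epsilon$), each of which, by Theorem~\ref{theorem:feasibility}(b), tests an inequality of the form $|\valids| \ge k$ for a $k$ of our choosing; binary search over $k \in \{0, 1, \dots, |\Sigma|^{\le n}\}$ then pins down $|\valids|$ exactly in polynomially many rounds, and by construction $|\valids|$ equals (or simply encodes) the NFA count. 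One must check that the rational parameters stay of polynomial bit-length throughout — which they do, since $k$ ranges over integers bounded by an exponential, so $\log k$ and hence the encoding of $\rho = 1/k$ is polynomial in $n$. With that in hand the collapse follows immediately.
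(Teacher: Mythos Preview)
Your proposal is correct and matches the paper's intended argument; the corollary is stated without a separate proof because it follows immediately from Theorem~\ref{theorem:nfa-hardness} together with the observation that a polynomial-time improvisation scheme decides feasibility in polynomial time by checking whether its output is~$\bot$. Your worry about step~(iii) is unnecessary: the full proof of Theorem~\ref{theorem:nfa-hardness} in the Appendix already gives exactly the polynomial-time Cook reduction via binary search that you describe, so you may simply invoke it rather than re-derive it.
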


This result indicates that in general, the control improvisation
problem is probably intractable in the presence of NFAs. Some special
cases could still be handled in practice: for example, if the NFA is
very small it could be converted to a DFA. Another tractable case is
where although one of $\mathcal{\improvs}$ or $\mathcal{\valids}$ (as
in Theorem \ref{theorem-dfa-scheme}) is an NFA, it has infinite
language (this can clearly be detected in polynomial time). If
$\mathcal{\valids}$ is an NFA with infinite language we can use case
(\ref{case:pump-v}) of the proof of Theorem \ref{theorem-dfa-scheme}, since an NFA
can be pumped in the same way as a DFA. If instead $\mathcal{\valids}$
is a DFA with finite language but $\mathcal{\improvs}$ is an NFA with
infinite language, one of the other cases (\ref{case:unif-v}),
(\ref{case:pump-i-unif-v}), or (\ref{case:infeasible}) applies, and in
case (\ref{case:pump-i-unif-v}) we can sample $\improvs \setminus
\valids$ by pumping $\mathcal{\improvs}$ enough to ensure we get a
string longer than any accepted by $\mathcal{\valids}$. Table
\ref{table:complexities} in Section \ref{section:conclusion}
summarizes these cases. 

For still more complex automata, the CI problem becomes even
harder. In fact, it is impossible if we allow either
$\mathcal{\improvs}$ or $\alpha$ to be given by a PFA. Let
$\mathcal{P}_1$ and $\mathcal{P}_2$ be the classes of CI instances
where each of these respectively are given by a PFA, and the other is
given by a DFA. Then letting $\mathcal{P}$ be either of these classes,
we have: 

\begin{restatable}{theorem}{theoremPFAHardness} \label{theorem:pfa-hardness}
Determining whether $\mathcal{C} \in \mathcal{P}$ is feasible is undecidable.
\end{restatable}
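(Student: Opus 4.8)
The plan is to reduce from a known undecidable problem about probabilistic finite automata --- most naturally, the emptiness problem for PFA languages with a strict cut-point, which is undecidable by the classical results of Paz and of Condon--Lipton. Equivalently, one can reduce from the problem of deciding, given a PFA and a rational threshold, whether some word is accepted with probability exceeding the cut-point (the ``strict-inequality non-emptiness'' problem). By Theorem~\ref{theorem:feasibility}, feasibility of $\mathcal{C} = (\mathcal{\improvs}, \alpha, \epsilon, \rho)$ depends only on the cardinalities $|\improvs|$ and $|\valids|$, so to prove undecidability it suffices to show that, from an instance of an undecidable PFA problem, one can build a CI instance in the relevant class whose feasibility encodes the answer --- typically by arranging that $|\improvs|$ or $|\valids|$ is $0$ in one case and $\ge 1$ (or infinite) in the other, with $\epsilon, \rho$ chosen so that the cardinality thresholds in clause~(b) separate the two cases.

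**First I would handle the class $\mathcal{P}_1$**, where $\mathcal{\improvs}$ is a PFA and $\alpha$ is given by a DFA. Here $\improvs = L(\mathcal{\improvs})$ is the cut-point language of the PFA, so deciding whether $\improvs$ is empty is already undecidable; taking $\alpha \equiv 1$ (a trivial one-state DFA) gives $\valids = \improvs$. Picking, say, $\rho = 1$ and $\epsilon = 0$, clause~(b) of Theorem~\ref{theorem:feasibility} reads $|\improvs| \ge 1$ and $|\valids| \ge 1$, i.e.\ $\mathcal{C}$ is feasible iff $L(\mathcal{\improvs}) \ne \emptyset$. Since the latter is undecidable for PFAs, so is feasibility for $\mathcal{P}_1$-instances. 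One subtlety to check: Definition~\ref{defn:feasible} requires $\rho \in (0,1]$, so $\rho = 1$ is allowed; and $\epsilon = 0$ is allowed since $\epsilon \in [0,1]$. So this direction is essentially immediate once the right undecidable PFA fact is cited.

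**Then I would handle $\mathcal{P}_2$**, where $\alpha$ is computed by a PFA and $\mathcal{\improvs}$ is a DFA. Now $\valids = \improvs \cap L(\mathcal{A}_\alpha)$ where $L(\mathcal{A}_\alpha)$ is a PFA cut-point language. Take $\mathcal{\improvs}$ to be a DFA accepting $\Sigma^*$ (so $\improvs = \Sigma^*$, which is infinite). Then $\valids = L(\mathcal{A}_\alpha)$, and the same choice $\rho = 1$, $\epsilon = 0$ makes $\mathcal{C}$ feasible iff $|\valids| \ge 1$, i.e.\ iff $L(\mathcal{A}_\alpha) \ne \emptyset$ --- again undecidable. (If for definitional reasons one prefers $\epsilon > 0$, one can instead take $\epsilon = 1$, $\rho = 1$, so feasibility becomes $|\improvs| \ge 1$ and $|\valids| \ge 0$, which is always true; that does not work, so $\epsilon = 0$ really is the natural choice, and one should double-check the admissibility predicate in the definition of a PFA language --- the paper fixes a cut-point making the language ``definite'', so $L(\mathcal{A}_\alpha)$ is a genuine language and $\alpha$ is a well-defined computable predicate.)

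**The main obstacle** I anticipate is not the reduction itself but pinning down exactly which PFA decision problem to cite and making sure its hypotheses match the paper's PFA formalism. The paper defines a PFA as a probabilistic automaton \emph{together with} a cut-point chosen so that the language is ``definite'' (isolated cut-point), which is a nontrivial restriction --- and for \emph{isolated} cut-points the emptiness problem is actually decidable, so a naive citation of ``PFA emptiness is undecidable'' would be wrong. The fix is that undecidability of emptiness/threshold problems holds when the cut-point is not assumed isolated, and the reduction must produce PFAs for which the constructed cut-point is nonetheless legitimate in the paper's sense (e.g.\ by a padding or gap argument, or by using a variant where the relevant word either exceeds or falls strictly below the cut-point); alternatively one reduces from the undecidability of whether $\mathrm{val}(\mathcal{A}) > p$ for a given PFA where $\mathrm{val}$ is the supremum acceptance probability, which is the form proved undecidable by Condon--Lipton / Gimbert--Oualhadj. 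Getting this matching exactly right --- and verifying the constructed $\alpha$ is still a computable predicate and the constructed $\mathcal{\improvs}$ is a valid improvisation automaton --- is the delicate part; the cardinality bookkeeping via Theorem~\ref{theorem:feasibility} is then routine.
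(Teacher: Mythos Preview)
Your reduction is correct and follows the same idea as the paper: reduce PFA emptiness to feasibility via Theorem~\ref{theorem:feasibility}, using the trivial DFA $\Sigma^*$ for the other component so that $\improvs = \valids$ (or $\valids = L(\mathcal{A}_\alpha)$ in the $\mathcal{P}_2$ case). Your choice $\rho = 1$, $\epsilon = 0$ is in fact cleaner than what the paper does: the paper takes $\rho = 1/N$ and first proves the intermediate fact that deciding $|L(\mathcal{A})| \ge N$ is undecidable, by prepending $N$ deterministic prefixes to a given PFA so that the new language has size $\ge N$ iff the original is nonempty. That extra step is unnecessary for the theorem as stated, since $N = 1$ already suffices.

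Your main worry---that the paper's phrase ``so that its language is definite'' means an isolated cut-point, for which emptiness is decidable---is a misreading. The paper's own proof cites the standard Nasu--Honda and Condon--Lipton undecidability results for general (strict) cut-points without any isolation hypothesis, so ``definite'' here just means that a particular cut-point has been fixed and hence a specific language is determined. You should drop the discussion of gap/padding arguments and value-$1$ variants; the plain emptiness problem for PFAs with a strict cut-point is exactly what is needed, and your reduction goes through without modification.
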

\begin{proof}[Proof sketch]
Checking the feasibility of $\mathcal{C}$ amounts to counting the language of a PFA, but determining whether the language of a PFA is empty is undecidable \cite{nasu-honda,condon-lipton}.
\end{proof}

\section{Symbolic Techniques} \label{section:symbolic}

Previously we have assumed that the automata defining a control
improvisation problem were given explicitly. However, in practice
there may be insufficient memory to store full transition tables, in
which case an implicit representation is required. This prevents us from using
the polynomial-time improvisation scheme of Theorem
\ref{theorem-dfa-scheme}, so we must look for alternate methods. These
will depend on the type of implicit representation used. We focus on
representations of DFAs and NFAs by propositional formulae, as used
for example in bounded model checking \cite{bmc}. 
\begin{definition}
A \emph{symbolic automaton} is a transition system over states $S \subseteq \{0,1\}^n$ and inputs $\Sigma \subseteq \{0,1\}^m$ represented by:
\begin{itemize}
\item a formula $\mathrm{init}(\overline{x})$ which is true iff $\overline{x} \in \{0,1\}^n$ is an initial state,
\item a formula $\mathrm{acc}(\overline{x})$ which is true iff $\overline{x} \in \{0,1\}^n$ is an accepting state, and
\item a formula $\delta(\overline{x}, \overline{a}, \overline{y})$ which is true iff there is a transition from $\overline{x} \in \{0,1\}^n$ to $\overline{y} \in \{0,1\}^n$ on input $\overline{a} \in \{0,1\}^m$ .
\end{itemize}
A symbolic automaton accepts words in $\Sigma^*$ according to the usual definition for NFAs.
\end{definition}

Given a symbolic automaton, it is straightforward to generate a
formula whose models correspond, for example, to accepting paths of at
most a given length (see \cite{bmc} for details). A SAT solver can
then be used to find such a path. We refer to the length of the
longest simple accepting path as the \emph{diameter} of the
automaton. This will be an important parameter in the runtime of our
algorithms. In some cases an upper bound on the diameter is known
ahead of time: for example, if we only want improvisations of up to
some maximum length, and have encoded that constraint in
$\mathcal{\improvs}$. If the diameter is not known, it can be found
iteratively with SAT queries asserting the existence of a simple
accepting path of length $n$, increasing $n$ until we find no such
path exists. The diameter could be exponentially large compared to the
symbolic representation, but this is a worst-case scenario. 

Our approach for solving the control improvisation problem with symbolic automata will be to adapt the procedure of Theorem \ref{theorem-dfa-scheme}, replacing the counting and sampling techniques used there with ones that work on symbolic automata. For language size estimation we use the following:
\begin{restatable}{lemma-indexed-by-theorem}{lemmaSymbolicCount} \label{lemma:symbolic-count}
If $\mathcal{S}$ is a symbolic automaton with diameter $D$, for any $\tau, \delta > 0$ we can compute an estimate of $|L(\mathcal{S})|$ accurate to within a factor of $1+\tau$ in time polynomial in $|\mathcal{S}|$, $D$, $1/\tau$, and $\log(1/\delta)$ relative to an {\NP} oracle.
\end{restatable}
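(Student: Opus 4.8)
The plan is to reduce the estimation of $|L(\mathcal{S})|$ to \emph{approximate projected model counting} for a propositional formula of size polynomial in $|\mathcal{S}|$ and $D$, and then invoke a Stockmeyer-style hashing procedure — the same one underlying \ApproxMC{} — which solves approximate model counting in randomized polynomial time given an \NP{} (SAT) oracle.

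First I would pin down which words $L(\mathcal{S})$ can contain. Since the transition relation $\delta(\overline{x},\overline{a},\overline{y})$ carries a genuine input symbol on every step (no $\epsilon$-transitions), any accepting run that revisits a state contains a nonempty cycle, and pumping that cycle yields infinitely many accepted words of distinct lengths; contrapositively, if $L(\mathcal{S})$ is finite then every accepting run is simple and hence has length at most the diameter $D$. Thus, in the setting of interest — where either $L(\mathcal{S})$ is finite or (as in the applications) a bound of $D$ on improvisation length has been encoded into $\mathcal{S}$ — the language $L(\mathcal{S})$ is exactly the set of words of length at most $D$ that $\mathcal{S}$ accepts; if $L(\mathcal{S})$ is infinite we detect this and report $\infty$, which trivially meets the accuracy requirement.

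Next I would encode acceptance of words of length $\le D$ symbolically. Following the standard bounded-model-checking unrolling, I construct a formula $\Phi$ over (i) variables $\overline{a}_1,\dots,\overline{a}_D$ encoding a word $w\in\Sigma^{\le D}$ (with a designated padding/stop symbol to represent lengths below $D$) and (ii) variables $\overline{x}_0,\dots,\overline{x}_D$ encoding a run, asserting $\mathrm{init}(\overline{x}_0)$, that each step respects $\delta$ (unless the stop symbol has already been read), and that the state reached when the word ends satisfies $\mathrm{acc}$. Plugging in the given formulas $\mathrm{init}$, $\mathrm{acc}$, $\delta$ makes $|\Phi|$ polynomial in $|\mathcal{S}|$ and $D$. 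By construction, $w\in L(\mathcal{S})$ iff some assignment to the run variables extends $w$ to a model of $\Phi$; that is, $L(\mathcal{S})$ is the \emph{projection} of the model set of $\Phi$ onto the word variables. The subtlety here — and what I expect to be the crux of the proof — is that a single word may admit many accepting runs, so naively counting satisfying assignments of $\Phi$ overcounts, and we must count the projection.

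Finally I would apply the hashing-based approximate (projected) model counting procedure. Imposing random parity (XOR) constraints $h_k$ on the word variables only, binary-searching on the number $k$ of constraints using \NP{}-oracle queries of the form ``is $\Phi\wedge h_k$ satisfiable?'' — which remain \NP{} queries precisely because the run variables are merely additional existentially quantified variables inside each query — and taking a median over $O(\log(1/\delta))$ independent repetitions, one obtains an estimate of $|L(\mathcal{S})|$ within a factor $1+\tau$ with probability at least $1-\delta$, using a number of queries whose count and size are polynomial in $|\Phi|$, $1/\tau$, and $\log(1/\delta)$. This is exactly the guarantee provided by \ApproxMC{} (and the reason \ApproxMC{}/\UniGen{}, which support counting and sampling over a prescribed ``sampling set'' of variables, are the right building blocks here). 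Composing with $|\Phi|=\mathrm{poly}(|\mathcal{S}|,D)$ gives the stated bound. The main obstacle, as noted, is handling the projection correctly so that one is truly counting distinct words rather than runs; a secondary point requiring care is justifying that $D$ genuinely bounds the lengths of all members of $L(\mathcal{S})$ (equivalently, that $L(\mathcal{S})$ is finite) in the regime where the lemma is applied.
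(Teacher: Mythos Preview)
Your proposal is correct and follows essentially the same approach as the paper: detect the infinite case (the paper does this via SAT queries for an accepting cycle with $|x|,|y|,|z|\le n$ for each $n\le D$, which you leave implicit), then build a BMC-style unrolling of length $D$, project onto the input variables so that models correspond to words rather than runs, and apply an \ApproxMC-style hashing counter relative to an \NP{} oracle. The only point you leave underspecified is the concrete mechanism for deciding infiniteness within the stated time bound; the paper fills this with $D$ many SAT queries for the $x,y,z$ cycle structure, which fits your framework directly.
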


Sampling from infinite languages can be done by a direct adaptation of the method for explicit DFAs in Lemma \ref{lemma-dfa-pump-samp}.
\begin{restatable}{lemma-indexed-by-theorem}{lemmaSymbPumpSamp} \label{lemma:symb-pump-samp}
There is a polynomial $p(x,y,z)$ such that for any $N \in \N$ and symbolic automaton $\mathcal{Y}$ with infinite language and diameter $D$, there is a probabilistic oracle algorithm $S^{\NP}$ which uniformly samples from a subset of $L(\mathcal{Y})$ of size $N$ in expected time at most $p(|\mathcal{Y}|, D, \log N)$ and which can be constructed in the same time.
\end{restatable}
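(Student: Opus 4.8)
The plan is to transcribe the explicit‑DFA argument of Lemma~\ref{lemma-dfa-pump-samp} almost verbatim, substituting calls to the {\NP} oracle for the graph searches it performs. Recall that for an explicit DFA the infinitude of the language is certified by a \emph{lasso}: words $u$, $v$, $w$ with $v \neq \epsilon$ such that $u$ leads from an initial state to some state $q$, $v$ labels a cycle at $q$, and $w$ leads from $q$ to an accepting state; then $u v^{i} w \in L$ for every $i \ge 0$, and the sampler picks $i$ uniformly from $\{0,\dots,N-1\}$ and outputs $u v^{i} w$. The words $u v^{i} w$ are pairwise distinct because $|v|\ge 1$ makes their lengths distinct, so this is a uniform sample from a size‑$N$ subset of $L$. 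The only part of this that does not immediately carry over is the production of the lasso itself, since we no longer have the transition graph in hand.

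To find the lasso symbolically I would use the bounded‑model‑checking encoding (as in~\cite{bmc}): for a target length $\ell$, write a propositional formula, of size polynomial in $|\mathcal{Y}|$ and $\ell$, asserting the existence of an $\ell$-step run from an initial state that ends in an accepting state and visits some state twice, together with fresh variables marking the first repeated state $q$ and the position of its first occurrence. A satisfying assignment, obtained from one {\NP}-oracle call, directly exhibits $u$, $v$ (nonempty, since the two occurrences of $q$ are at distinct positions), and $w$. One runs this with $\ell = D+1, D+2, \dots$ (iterative deepening); since $L(\mathcal{Y})$ is infinite it contains strings longer than any simple accepting path, so some accepting run of length greater than $D$ exists, and the first one encountered is necessarily non‑simple and hence yields a lasso. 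The key quantitative point to nail down here is that this search halts at depth polynomial in $D$ --- equivalently, that infinitude of the language forces a ``productive'' lasso whose three segments are short relative to the diameter --- so that every query formula, and the number of queries, stays polynomial in $|\mathcal{Y}|$ and $D$; I expect this length bound to be the main obstacle, the rest being routine.

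Once $u$, $v$, $w$ are fixed, the sampler $S^{\NP}$ needs no further oracle access: on each invocation it draws $i$ uniformly from $\{0,\dots,N-1\}$ by rejection sampling on $\lceil \log_2 N \rceil$ fair coin flips (expected $O(\log N)$ flips, as the rejection probability is below $1/2$) and emits $u v^{i} w$. Its output distribution is uniform over $\{u v^{i} w : 0 \le i < N\} \subseteq L(\mathcal{Y})$, a set of size exactly $N$, as required. For the complexity bookkeeping: constructing $S^{\NP}$ costs the lasso search --- polynomially many {\NP}-oracle calls on formulas polynomial in $|\mathcal{Y}|$ and $D$ --- and each run costs the $O(\log N)$ sampling of $i$ plus the emission of $u v^{i} w$; absorbing all of this into a single polynomial $p(|\mathcal{Y}|, D, \log N)$ gives the claimed bound, exactly as in Lemma~\ref{lemma-dfa-pump-samp} but with the extra parameter $D$ accounting for the unrolling depth.
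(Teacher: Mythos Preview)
Your proposal is correct and follows essentially the paper's approach: locate a lasso via SAT and then pump it exactly as in Lemma~\ref{lemma-dfa-pump-samp}. The one difference is in how the lasso is obtained: the paper reuses the cycle-detection query already set up in Lemma~\ref{lemma:symbolic-count}, which encodes the three segments $x,y,z$ directly with the constraint $|x|,|y|,|z|\le D$ in a single formula, whereas you search iteratively for a non-simple accepting run of length $D+1,D+2,\dots$ and extract the lasso from the repeated state; your flagged ``main obstacle'' (that this search halts at depth polynomial in $D$) is exactly what the paper's direct encoding sidesteps, and conversely, once one grants that short segments exist, your iterative search succeeds by depth at most $3D$.
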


To sample from a finite language, we use techniques for almost-uniform
generation of models of propositional formulae. In theory uniform
sampling can be done exactly using a SAT solver \cite{bgp}, but the
only algorithms which work in practice are \emph{approximate} uniform
generators such as {\UniGen} \cite{unigen}. This algorithm guarantees
that the probability of returning any given model is within a factor
of $1 + \tau$ of the uniform probability, for any given $\tau > 6.84$
(the constant is for technical reasons specific to
{\UniGen}). {\UniGen} can also do projection sampling, i.e., sampling
where two models are considered identical if they agree on the set of
variables being projected onto. Henceforth, for simplicity, we will
assume we have a generic almost-uniform generator that can do
projection, and will ignore the $\tau > 6.84$ restriction imposed by
{\UniGen} (although we might want to abide by this in practice in
order to be able to use the fastest available algorithm). We assume
that the generator runs in time polynomial in $1/\tau$ and the size of
the given formula relative to an {\NP} oracle, and succeeds with at
least some fixed constant probability. 
\begin{restatable}{lemma-indexed-by-theorem}{lemmaSymbUnifSamp} \label{lemma:symb-unif-samp}
There is a polynomial $q(x,y,z)$ such that for any $\tau > 0$ and symbolic automaton $\mathcal{Y}$ with finite language and diameter $D$, there is a probabilistic oracle algorithm $S^{\NP}$ which samples from $L(\mathcal{S})$ uniformly up to a factor of $1+\tau$ in expected time at most $q(|\mathcal{Y}|, D, 1/\tau)$, and which can be constructed in the same time.
\end{restatable}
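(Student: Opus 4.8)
The plan is to mirror the explicit DFA construction of Lemma~\ref{lemma-dfa-unif-samp}, but to replace its counting-and-sampling machinery with a single call to the generic almost-uniform projection generator assumed above. First observe that since $L(\mathcal{Y})$ is finite, no accepting path of $\mathcal{Y}$ can traverse a cycle, so every accepted word is witnessed by a \emph{simple} accepting path and hence has length at most the diameter $D$. Using the standard bounded-model-checking unrolling (as in \cite{bmc}), I would build a propositional formula $\phi$ over state variables $\overline{x}_0,\dots,\overline{x}_D$, input variables $\overline{a}_1,\dots,\overline{a}_D$, and a small block of auxiliary ``length'' variables encoding some $\ell\in\{0,\dots,D\}$ together with the constraint that exactly the first $\ell$ input blocks are ``active'' (the inactive suffix being forced to a fixed canonical value). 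The formula asserts $\mathrm{init}(\overline{x}_0)$, that $\delta(\overline{x}_{i-1},\overline{a}_i,\overline{x}_i)$ holds for each active step $i\le\ell$, and that $\mathrm{acc}(\overline{x}_\ell)$ holds. Projecting $\phi$ onto the length variables and the input variables, the models correspond exactly to the words of $L(\mathcal{Y})$: every accepted word $w=a_1\cdots a_\ell$ is the projection of at least one model (take any witnessing path), and conversely every model projects to such a word, the canonical padding making the projected assignment a faithful encoding of $w$. The size of $\phi$ is polynomial in $|\mathcal{Y}|$ and $D$, and it is constructible in the same time.

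Next, I would invoke the generic almost-uniform generator with projection onto this variable set and accuracy parameter $\tau$. By assumption it runs in time polynomial in $1/\tau$ and $|\phi|$ relative to an $\NP$ oracle, and with projection it returns each \emph{word} of $L(\mathcal{Y})$ (two models agreeing on the projected variables being identified) with probability within a factor of $1+\tau$ of $1/|L(\mathcal{Y})|$. Since the generator only succeeds with some fixed constant probability, the algorithm $S^{\NP}$ simply re-runs it until success; the expected number of trials is $O(1)$, so the total expected running time is bounded by a polynomial $q(|\mathcal{Y}|,D,1/\tau)$, and the same bound governs the one-time construction of $S^{\NP}$. If $L(\mathcal{Y})$ happens to be empty the statement is vacuous, but this case can in any event be detected up front with a single satisfiability query on $\phi$.

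The main obstacle is getting the variable-length encoding exactly right, so that the projected models are in clean correspondence with the words and the generator's model-uniformity genuinely transfers to word-uniformity. In a nondeterministic symbolic automaton a single word may be witnessed by many state sequences, and a padded encoding could in principle admit several inactive-suffix assignments; the fix in both cases is that projection collapses all such duplicates, so it suffices to force the padding to a unique canonical value determined by the length variables alone and to project onto precisely the length and input variables. Once this bookkeeping is pinned down, the remaining steps — the size and time bounds on $\phi$ and on the generator, and the constant-factor blow-up from retrying — are routine, and the choice of the generator's accuracy parameter is immediate, since we may pass $\tau$ through unchanged (ignoring, as stated, the $\tau>6.84$ restriction specific to \UniGen).
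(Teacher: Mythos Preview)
Your proposal is correct and follows essentially the same approach as the paper: build a bounded-model-checking formula $\phi$ of size polynomial in $|\mathcal{Y}|$ and $D$ whose projected models correspond to the words of $L(\mathcal{Y})$, then feed it to the assumed almost-uniform projection generator. The paper's own proof is terser, simply citing the formula $\phi$ already introduced in Lemma~\ref{lemma:symbolic-count} and projecting onto the input variables; you supply more of the encoding details (length variables, canonical padding) that the paper leaves implicit, and you fold the retry-on-failure loop into the lemma itself whereas the paper defers that to the proof of the main symbolic theorem, but these are organizational rather than mathematical differences.
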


Now we can put these methods together to get a version of Theorem \ref{theorem-dfa-scheme} for symbolic automata. The major differences are that this scheme requires an {\NP} oracle, has some probability of failure (which can be specified), and returns an improviser with a slightly sub-optimal value of $\rho$. The proof generally follows that of Theorem \ref{theorem-dfa-scheme}, so we only sketch the differences here (see the Appendix for a full proof).
\begin{restatable}{theorem}{theoremSymbolicScheme}
There is a procedure that given any CI problem $\mathcal{C}$ where $\mathcal{\improvs}$ and $\alpha$ are given by symbolic automata with diameter at most $D$, and any $\epsilon \in [0,1]$, $\rho \in (0,1]$, and $\tau, \delta > 0$, if $\mathcal{C}$ is $(\epsilon, \rho/(1+\tau))$-feasible returns an $(\epsilon, (1+\tau)^2 (1+\epsilon) \rho)$-improviser with probability at least $1-\delta$. Furthermore, the procedure and the improvisers it generates run in expected time given by some fixed polynomial in $|\mathcal{C}|$, $D$, $1/\tau$, and $\log(1/\delta)$ relative to an {\NP} oracle.
\end{restatable}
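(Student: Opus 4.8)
The plan is to follow the proof of Theorem~\ref{theorem-dfa-scheme} essentially line by line, replacing its three ingredients --- exact language counting (Lemma~\ref{lemma-dfa-counting}), exact sampling from a size-$N$ subset of an infinite language (Lemma~\ref{lemma-dfa-pump-samp}), and exact uniform sampling from a finite language (Lemma~\ref{lemma-dfa-unif-samp}) --- by their symbolic counterparts (Lemmas~\ref{lemma:symbolic-count}, \ref{lemma:symb-pump-samp}, and~\ref{lemma:symb-unif-samp}). Two things degrade in the translation: counting is now only multiplicatively $(1+\tau)$-accurate and succeeds only with high probability, and finite-language sampling is only $(1+\tau)$-almost-uniform; this is exactly why the output $\rho$ must be inflated by $(1+\tau)^2(1+\epsilon)$ and why the scheme has a failure probability. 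As in Theorem~\ref{theorem-dfa-scheme}, I would first build symbolic automata $\mathcal{\valids}$ recognizing $\valids$ (the product of the given automata for $\mathcal{\improvs}$ and $\alpha$) and $\mathcal{B}$ recognizing $\improvs \setminus \valids$; both have size polynomial in $|\mathcal{C}|$ and diameter polynomial in $D$, and whether each has an infinite or an empty language can be decided with a few {\NP} queries.

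Next I would estimate $v := |\valids|$ and $i := |\improvs|$ to within a factor $1+\tau$ using Lemma~\ref{lemma:symbolic-count}, each with failure probability at most $\delta/2$, so that by a union bound both estimates $\hat v, \hat i$ are accurate with probability at least $1-\delta$; all subsequent reasoning is conditioned on this event. Recall that the hypothesis that $\mathcal{C}$ is $(\epsilon, \rho/(1+\tau))$-feasible means, by Theorem~\ref{theorem:feasibility}, that $i \ge (1+\tau)/\rho$ and $v \ge (1-\epsilon)(1+\tau)/\rho$; combined with $(1+\tau)$-accuracy this forces $\hat i \ge 1/\rho$ and $\hat v \ge (1-\epsilon)/\rho$. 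The scheme then branches: \emph{(i)} if $v = \infty$, apply Lemma~\ref{lemma:symb-pump-samp} to $\mathcal{\valids}$ with $N = \ceil{1/\rho}$ to get an exact uniform sampler on a size-$\ceil{1/\rho}$ subset of $\valids$, which is a $(0,\rho)$-improviser (no loss, and no dependence on the estimates); \emph{(ii)} if $\hat v < (1-\epsilon)/\rho$ or $\hat i < 1/\rho$, return $\bot$ --- this cannot happen under the feasibility hypothesis once estimation succeeds, so it costs nothing; \emph{(iii)} otherwise build an improviser $G$ that with probability $p$ runs an almost-uniform sampler on $\mathcal{\valids}$ (Lemma~\ref{lemma:symb-unif-samp}) and with probability $1-p$ runs a sampler on $\mathcal{B}$, the latter being an exact size-$M$ sampler (Lemma~\ref{lemma:symb-pump-samp}) if $\improvs \setminus \valids$ is infinite and an almost-uniform sampler (Lemma~\ref{lemma:symb-unif-samp}) otherwise. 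The mixing weight $p$ (and, when needed, the size $M$) are chosen conservatively from $\hat v$ --- roughly $p \approx (1+\epsilon)\rho\,\hat v$, capped at $1$ --- so that the estimation and sampling slack are absorbed into the allowed factor.

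It then remains to verify two inequalities for $G$. First, $\Pr[\, w \in \valids \suchthat w \leftarrow (\text{output of }G)\,] = p \ge 1-\epsilon$, because $L(\mathcal{B})$ is disjoint from $\valids$ and because $\hat v \ge (1-\epsilon)/\rho$ makes the chosen $p$ at least $1-\epsilon$. Second, every improvisation has probability at most $(1+\tau)^2(1+\epsilon)\rho$: for $w \in \valids$ this follows from the sampler giving $w$ probability at most $(1+\tau)/v$ within its branch and from $v \ge \hat v/(1+\tau)$, contributing the two $(1+\tau)$ factors; for $w \in \improvs \setminus \valids$ one uses the lower bound $i \ge (1+\tau)/\rho$ together with $v \le (1+\tau)\hat v$ to show that $|\improvs \setminus \valids| = i - v$ is large enough that the leftover mass $1-p$, spread (almost) uniformly over it, leaves each element below the bound; in the infinite case $M$ is simply taken large enough, with $\log M$ staying polynomial in $\log(1/\rho)$. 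The running-time bound is then immediate: the product constructions, the infinity/emptiness tests, the two applications of Lemma~\ref{lemma:symbolic-count}, and the samplers inside $G$ all run in time polynomial in $|\mathcal{C}|$, $D$, $1/\tau$, and $\log(1/\delta)$ relative to an {\NP} oracle, and the parameters $\ceil{1/\rho}$ and $M$ fed to the pumping lemma enter only through their logarithms.

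The main obstacle is case~\emph{(iii)}: because the counts are only approximate, the clean case split of Theorem~\ref{theorem-dfa-scheme} (which compared $v$ and $i$ exactly against $1/\rho$ and $(1-\epsilon)/\rho$) blurs, and one must pick the single mixing weight $p$ --- from the estimate $\hat v$ alone --- so that it \emph{simultaneously} guarantees at least $1-\epsilon$ mass on $\valids$ and keeps every individual output probability (for words both inside and outside $\valids$) within $(1+\tau)^2(1+\epsilon)\rho$, using only the two consequences $i \ge (1+\tau)/\rho$ and $v \ge (1-\epsilon)(1+\tau)/\rho$ of the feasibility hypothesis. Getting the arithmetic of these competing constraints to close is the crux. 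A secondary, more bookkeeping-style issue is checking that $\mathcal{B}$ is genuinely constructible as a small symbolic automaton (which needs the automaton for $\alpha$ to be complementable, immediate when it is deterministic) and that the degenerate empty-language cases are handled so that $G$ always has finite expected runtime.
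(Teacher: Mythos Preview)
Your approach is correct and broadly follows the paper's structure, with two differences worth noting. First, you collapse the paper's three ``interesting'' cases (B)--(D) into a single branch~(iii) governed by one mixing weight $p$, whereas the paper treats them separately and in particular uses the \emph{fixed} weight $1-\epsilon$ in case~(D). Second, and more substantively, in the subcase where $|\valids|$ and $|\improvs|$ are both finite with $(1-\epsilon)/\rho \le \hat v < 1/\rho \le \hat i$, you draw the second branch from $\mathcal{B}$ (language $\improvs\setminus\valids$), while the paper draws it from $\mathcal{\improvs}$ itself. The paper's choice is analytically simpler: $|\improvs|\ge 1/((1+\tau)\rho)$ is immediate, so no lower bound on $|\improvs\setminus\valids|$ is needed, and the $(1+\epsilon)$ factor in the final bound arises precisely because a word of $\valids$ can be produced by \emph{both} branches. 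Your choice keeps the two branches disjoint but forces you to lower-bound $|\improvs\setminus\valids|$, which is exactly the arithmetic you flag as the crux; it does in fact close --- already with $p=\min(1,\rho\hat v)$ one has $i-v \ge (1+\tau)/\rho-(1+\tau)\hat v=(1+\tau)(1-\rho\hat v)/\rho$ under the feasibility hypothesis, giving each $w\notin\valids$ probability at most $\rho$ and thus the \emph{tighter} overall bound $(1+\tau)^2\rho$ --- but you must also separately detect $L(\mathcal{B})=\emptyset$ so that the procedure terminates even when feasibility fails. In short, your route is a bit more delicate but can in principle shave off the $(1+\epsilon)$ factor; the paper's route trades that factor for a case~(D) that never depends on $\mathcal{B}$ being nonempty.
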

\begin{proof}[Proof sketch]
We first compute estimates $E_{\valids}$ and $E_{\improvs}$ of $|\valids|$ and $|\improvs|$ respectively using Lemma \ref{lemma:symbolic-count}. Then we break into cases as in Theorem \ref{theorem-dfa-scheme}:
\begin{enumerate}[(A)]
\item $E_{\valids} = \infty$: As in case (\ref{case:pump-v}) of Theorem \ref{theorem-dfa-scheme}, using Lemma \ref{lemma:symb-pump-samp} in place of Lemma \ref{lemma-dfa-pump-samp}. We obtain a $(0, \rho)$-improviser.

\item $1/\rho \le E_{\valids} < \infty$: As in case (\ref{case:unif-v}) of Theorem \ref{theorem-dfa-scheme}, using Lemma \ref{lemma:symb-unif-samp} in place of Lemma \ref{lemma-dfa-unif-samp}. Since we can do only approximate counting and sampling, we obtain a $(0, (1+\tau)^2 \rho)$-improviser.

\item $(1-\epsilon)/\rho \le E_{\valids} < 1/\rho$ and $E_{\improvs} = \infty$: As in case (\ref{case:pump-i-unif-v}) of Theorem \ref{theorem-dfa-scheme}, using Lemmas \ref{lemma:symb-pump-samp} and \ref{lemma:symb-unif-samp} in place of Lemmas \ref{lemma-dfa-pump-samp} and \ref{lemma-dfa-unif-samp}. Our use of approximate counting/sampling means we obtain only an $(\epsilon, (1+\tau)^2 \rho)$-improviser.

\item $(1-\epsilon)/\rho \le E_{\valids} < 1/\rho \le E_{\improvs} < \infty$: We cannot use the procedure in case (\ref{case:unif-i-unif-v}) of Theorem \ref{theorem-dfa-scheme}, since it may generate an element of $L(\mathcal{B})$ with too high probability if our estimate $E_{\valids}$ is sufficiently small. Instead we sample almost-uniformly from $L(\mathcal{\valids})$ with probability $\epsilon$, and from $L(\mathcal{\improvs})$ with probability $1-\epsilon$. This yields an $(\epsilon, (1+\tau)^2 (1+\epsilon) \rho)$-improviser.

\item \label{case:symb-infeasible} $E_{\improvs} < 1/\rho$ or $E_{\valids} < (1-\epsilon) / \rho$: We return $\bot$.
\end{enumerate}

If $\mathcal{C}$ is $(\epsilon, \rho/(1+\tau))$-feasible, case (\ref{case:symb-infeasible}) happens with probability less than $\delta$ by Theorem \ref{theorem:feasibility}. Otherwise, we obtain an $(\epsilon, (1+\tau)^2 (1+\epsilon) \rho)$-improviser.
\end{proof}

Therefore, it is possible to {\em approximately} solve the control
improvisation problem when the automata are given by a succinct
propositional formula representation. This allows working with general
NFAs, and very large automata that cannot be stored explicitly, but
comes at the cost of using a SAT solver (perhaps not a heavy cost
given the dramatic advances in the capacity of SAT solvers) and
possibly having to increase $\rho$ by a small factor. 

\section{Conclusion} \label{section:conclusion}

In this paper, we introduced control improvisation, the problem of
creating improvisers that randomly generate variants of words in the
languages of automata. We gave precise conditions for when improvisers
exist, and investigated the complexity of finding improvisers for
several major classes of automata. In particular, we showed that the
control improvisation problem for DFAs can be solved in polynomial
time, while it is intractable in most cases for NFAs and undecidable 
for PFAs. These results are summarized in Table
\ref{table:complexities}. Finally, we studied the case where the
automata are presented symbolically instead of explicitly, and showed
that the control improvisation problem can still be solved
approximately using SAT solvers. 

One interesting direction for future
work would be to find other tractable cases of the control
improvisation problem deriving from finer structural properties of the
automata than just determinism. Extensions of the theory to other classes of
formal languages, for instance context-free languages represented by 
pushdown automata or context-free grammars, are also worthy of study. 
Finally, we are investigating
further applications, particularly in the areas of testing, security,
and privacy. 

\setlength{\tabcolsep}{5pt}
\renewcommand{\arraystretch}{1.2}
\begin{table}[tb]
\begin{center}
\begin{tabular}{cr||c|c|c|c|}
 & $\alpha$ & \textbf{DFA} & \multicolumn{2}{|c|}{\textbf{NFA}} & \textbf{PFA} \\
$\mathcal{\improvs}$ &  &  & $L(\mathcal{\valids}) = \infty$ & $L(\mathcal{\valids}) < \infty$ &  \\
\hline
\hline
\textbf{DFA} &  & \multicolumn{2}{c|}{\multirow{2}{*}{poly-time}} & \multirow{3}{*}{\sharpP-hard} & \multirow{4}{*}{} \\
\cline{1-2}
\multirow{2}{*}{\textbf{NFA}} & $L(\mathcal{\improvs}) = \infty$ & \multicolumn{2}{c|}{} & &  \\
\cline{2-4}
& $L(\mathcal{\improvs}) < \infty$ & \sharpP-hard & - &   &  \\
\cline{1-5}
\textbf{PFA} &  & \multicolumn{4}{r|}{undecidable} \\
\hline \\
\end{tabular}
\caption{Complexity of the control improvisation problem when $\mathcal{\improvs}$ and $\alpha$ are given by various different types of automata. The cell marked `-' is impossible since $L(\mathcal{\valids}) \subseteq L(\mathcal{\improvs})$.}
\label{table:complexities}
\end{center}
\label{defaulttable}
\end{table}

\subparagraph*{Acknowledgements}
The first three authors dedicate this paper to the memory of the fourth author, David Wessel, who passed away while it was being written.
We would also like to thank Ben Caulfield, Orna Kupferman, Markus Rabe, and the anonymous reviewers for their helpful comments.
This work is supported in part by the National Science Foundation Graduate Research Fellowship Program 
under Grant No.~DGE-1106400, by the NSF Expeditions grant CCF-1139138, and 
by TerraSwarm, one of six centers of STARnet, a Semiconductor Research Corporation program sponsored by MARCO and DARPA.

\bibliographystyle{plain}
\bibliography{main}

\appendix

\section{Proofs}

\setcounter{section}{4}
\begin{example-indexed-by-theorem}[Factor Oracles]
The factor oracle-based admissibility predicate $\alpha$ described above can be encoded as a DFA of size $O(|F| \cdot 2^k)$ as follows: we have a copy of $F$, denoted $F_s$, for every string $s \in \{0,1\}^k$, each bit of $s$ indicating whether the corresponding previous transition (out of the last $k$) was non-direct. As each new symbol is processed, we execute the current copy of $F$ as usual, but move to the appropriate state of the copy of $F$ corresponding to the new $k$-transition history, i.e., if we were in $F_s$, we move to $F_t$ where $t$ consists of the last $k-1$ bits of $s$ followed by a 0 if the transition we took was direct and a 1 otherwise. Making the states of $F_s$ accepting iff the number of 1s in $s$ is in $[\ell,h]$, this automaton represents $\alpha$ as desired.
\end{example-indexed-by-theorem}

\lemmaDFACounting*
\begin{proof}
First we prune irrelevant states unreachable from the initial state or from which no accepting state can be reached (this pruning can clearly be done in polynomial time). If the resulting graph contains a cycle (also detectable in polynomial time), we return $\infty$. Otherwise $\mathcal{D}$ is a DAG with multiple edges, and every sink is an accepting state. For each accepting state $s$ we add a new vertex and an edge to it from $s$. Then there is a one-to-one correspondence between accepting words of $\mathcal{D}$ and paths from the initial state to a sink. Now we can compute for each vertex $v$ the number of paths $p_v$ from it to a sink using the usual linear-time DAG algorithm (traversal in reverse topological order) modified slightly to handle multiple edges. We return $p_v$ with $v$ the initial state.
\end{proof}

\lemmaDFAPumpSamp*
\begin{proof}
Having pruned $\mathcal{D}$ as in Lemma \ref{lemma-dfa-counting}, since $L(\mathcal{D})$ is infinite there must be some state $s$ of $\mathcal{D}$ such that
\begin{itemize}
\item there is a word $x \in \Sigma^*$ which takes $\mathcal{D}$ from its initial state to $s$,
\item there is a nonempty word $y \in \Sigma^*$ which takes $\mathcal{D}$ from $s$ to itself, and
\item there is a word $z \in \Sigma^*$ which takes $\mathcal{D}$ from $s$ to an accepting state.
\end{itemize}
We can find $x,y,z \in \Sigma^*$ as above with $|x|,|y|,|z| \le |\mathcal{D}|$, in time polynomial in $|\mathcal{D}|$. Then we have $x y^n z \in L(\mathcal{D})$ for any $n \in \N$. We form a probabilistic algorithm $S$ which acts as follows: it prints $x$, then picks an integer uniformly at random from $[0,N-1]$ and prints that many copies of $y$, before finally printing $z$. Clearly the output of $S$ is a uniform sample from a subset of $L(\mathcal{D})$ of size $N$. Constructing $S$ takes time polynomial in $|\mathcal{D}|$ (as this bounds the sizes of $x$, $y$, and $z$) and $\log N$, and $S$ runs in expected time bounded by a fixed polynomial in these values.
\end{proof}

\lemmaDFAUnifSamp*
\begin{proof}
Prune $\mathcal{D}$ and compute the path counts $p_v$ as in Lemma \ref{lemma-dfa-counting}. To every edge $(u,v)$ in $\mathcal{D}$ assign the weight $p_v / p_u$. It is clear that at every vertex the sum of the weights of the outgoing edges is 1 (unless the vertex is a sink). We prove by induction along reverse topological order that treating these weights as transition probabilities, starting from any state $u$ and talking a random walk until a sink is reached we obtain a uniform distribution over all paths from $u$ to a sink. If $u$ is a sink this holds trivially. If $u$ has a nonempty set of children $S$, then by the inductive hypothesis for every $v \in S$ starting a walk at $v$ gives a uniform distribution over the $p_v$ paths from $v$ to a sink. Therefore the probability of following any such path starting at $u$ is $(p_v/p_u) \cdot (1/p_v) = 1/p_u$. So the result holds by induction. In particular, if we start from the initial state we obtain a uniform distribution over all paths to a sink, and thus a uniform distribution over $L(\mathcal{D})$. Since all probabilities are rational with denominators bounded by ${|\Sigma|}^{|\mathcal{D}|}$, this walk can be performed by a probabilistic algorithm $S$ of size polynomial in $|\mathcal{D}|$, with expected time bounded by a fixed polynomial in $|\mathcal{D}|$. Then $S$ returns a uniform sample from $L(\mathcal{D})$, and it can be constructed in time polynomial in $|\mathcal{D}|$.
\end{proof}

\theoremNFAHardness*
\begin{proof}
We prove this for $\mathcal{N}_1$ --- the other case is analogous. As shown in \cite{sharpNFA}, the problem of determining $| L(\mathcal{M}) \cap \Sigma^m |$ given an NFA $\mathcal{M}$ over an alphabet $\Sigma$ and $m \in \N$ in unary is $\sharpP$-complete. We give a polynomial-time (Cook) reduction from this problem to checking feasibility of a CI instance in $\mathcal{N}_1$.

As noted in \cite{sharpNFA}, we can in polynomial time (in $m$, which is acceptable since $m$ is given in unary) construct an NFA $\mathcal{M'}$ such that $|L(\mathcal{M'})| = |L(\mathcal{M}) \cap \Sigma^m|$. Then we construct the trivial DFA $\mathcal{T}$ accepting all of $\Sigma^*$, and consider the CI instances $\mathcal{C}_N = (\mathcal{M'}, \mathcal{T}, 0, 1/N)$. Clearly for these instances we have $\improvs = \valids = L(\mathcal{M'})$. By Theorem \ref{theorem:feasibility}, $\mathcal{C}_N$ is feasible iff $|L(\mathcal{M}) \cap \Sigma^m| = |L(\mathcal{M'})| \ge N$, and since $\mathcal{C}_N \in \mathcal{N}_1$ for any $N \in \N$ we can determine whether this is the case. Since $|L(\mathcal{M}) \cap \Sigma^m| \le |\Sigma|^m$, using binary search we can find the exact value of $|L(\mathcal{M}) \cap \Sigma^m|$ with polynomially-many such queries.
\end{proof}

\theoremPFAHardness*
\begin{proof}
We prove this for $\mathcal{P}_1$, the other case being similar. Given a PFA $\mathcal{A}$ with cut-point $p$ over an alphabet $\Sigma$ with at least two symbols, determining whether the language of $\mathcal{A}$ is empty (i.e. whether it accepts no words with probability greater than $p$) is undecidable \cite{nasu-honda,condon-lipton}. For any $N > 0$, we can construct a PFA $\mathcal{A'}$ by adding new states and deterministic transitions to $\mathcal{A}$ so that there are exactly $N$ words taking $\mathcal{A'}$ from its initial state to the initial state of $\mathcal{A}$, and any word which does not have one of these as a prefix causes $\mathcal{A'}$ to reject with probability 1. Then $L(\mathcal{A'})$ is $L(\mathcal{A})$ with one of the $N$ prefixes added to each word. Therefore $|L(\mathcal{A'})| \ge N$ iff $L(\mathcal{A}) \ne \emptyset$, and so it is undecidable to determine whether the language of a PFA has at least $N$ elements.

Constructing the trivial DFA $\mathcal{T}$ accepting all of $\Sigma^*$, the CI instance $\mathcal{C} = (\mathcal{A}, \mathcal{T}, 0, 1/N)$ satisfies $\improvs = \valids = L(\mathcal{A})$. By Theorem \ref{theorem:feasibility}, $\mathcal{C}$ is feasible iff $|L(\mathcal{A})| \ge N$. Since checking this latter condition is undecidable, so is determining feasibility of $\mathcal{P}_1$-instances.
\end{proof}

\lemmaSymbolicCount*
\begin{proof}
Recall that the algorithm in Lemma \ref{lemma-dfa-counting} detected
accepting cycles by finding words $x,y,z$ taking the automaton to some
state $s$, from $s$ to $s$ along at least one transition, and to an
accepting state respectively. If we impose the additional constraint
$|x|, |y|, |z| \le n$, then we can check the existence of such words
with a single SAT query. So to test whether $L(\mathcal{S})$ is
infinite, we use one query for each $n \le D$: since $D$ is the
diameter, we are guaranteed to find an accepting cycle if one
exists. Thus if any query is satisfiable, we return $\infty$. 

If instead all the queries fail, then all words in $L(\mathcal{S})$
have length at most $D$. So the SAT query $\phi$ for accepting paths
of length at most $D$ in fact matches every accepting path. Models of
this formula then correspond to accepting words if we project onto the
input variables (i.e. ignore the values of the variables which encode
states). Therefore $|L(\mathcal{S})|$ is equal to the number of models
of $\phi$ after projection. The general problem of counting models of
a propositional formula (even without projection) is \sharpP-complete,
but using the SAT solver we can get probabilistic bounds. An
approximate model counter such as {\ApproxMC} \cite{approxmc} can
return an estimate of the number of models of $\phi$ which is accurate
to within a factor of $1 + \tau$ with probability at least $1 -
\delta$. In fact {\ApproxMC} can be easily modified to do projection
counting (see \cite{unigen}), giving us the required estimate of
$|L(\mathcal{S})|$. 

The first stage of this process clearly takes time polynomial in $|\mathcal{S}|$ and $D$ relative to the oracle. {\ApproxMC} runs in time polynomial in $|\phi| = O(D |\mathcal{S}|)$, $1/\tau$, and $\log(1/\delta)$ relative to the oracle, so this procedure does as well.
\end{proof}

\lemmaSymbPumpSamp*
\begin{proof}
We look for an accepting cycle using the method in Lemma \ref{lemma:symbolic-count}. One will be found since $|L(\mathcal{Y})|$ is infinite, and then we can pump it to get $N$ different words just as in Lemma \ref{lemma-dfa-pump-samp}.
\end{proof}

\lemmaSymbUnifSamp*
\begin{proof}
As noted in Lemma \ref{lemma:symbolic-count}, since the language is finite every word in it has length at most $D$. Constructing the formula $\phi$ from that lemma, once we project onto the input variables there is a one-to-one correspondence between accepting words and models of $\phi$. So we need to almost-uniformly generate projected models of a propositional formula. We use an almost-uniform generator as described above, whose runtime will be polynomial in $|\phi| = O(D|\mathcal{Y}|)$ and $1/\tau$ relative to the oracle.
\end{proof}

\theoremSymbolicScheme*
\begin{proof}
The procedure begins by deriving the symbolic representations of $\mathcal{\valids}$ and $\mathcal{B}$ (the product of $\mathcal{\improvs}$ and the complement of $\mathcal{\valids}$). Next we estimate $|\valids| = |L(\valids)|$ and $|\improvs| = |L(\mathcal{\improvs})|$ using Lemma \ref{lemma:symbolic-count} with a confidence of $(1-\delta)^{1/2}$. Then with probability at least $1 - \delta$, both these estimates are within a factor of $1+\tau$ of the true values. We assume this is the case for the rest of the proof, making no guarantees otherwise. We now break into the same cases as Theorem \ref{theorem-dfa-scheme}, using our estimates $E_{\valids}$ and $E_{\improvs}$ of $|\valids|$ and $|\improvs|$ respectively.

\begin{enumerate}[(A)]
\item $E_{\valids} = \infty$: In this case we must have $|\valids| = \infty$. We proceed as in case (\ref{case:pump-v}) of Theorem \ref{theorem-dfa-scheme}, but using Lemma \ref{lemma:symb-pump-samp} with $N = \ceil{1/\rho}$ in place of Lemma \ref{lemma-dfa-pump-samp} to obtain a $(0, \rho)$-improviser.

\item $1/\rho \le E_{\valids} < \infty$: Then $1/ \rho(1+\tau) \le E_{\valids} / (1+\tau) \le |\valids|$. We proceed as in case (\ref{case:unif-v}) of Theorem \ref{theorem-dfa-scheme}, using Lemma \ref{lemma:symb-unif-samp} in place of Lemma \ref{lemma-dfa-unif-samp}. Since we are using an almost-uniform generator instead of a uniform one, some words could have probability as high as $(1+\tau) / |\valids| \le (1+\tau)^2 \rho$, and so this gives us a $(0, (1+\tau)^2 \rho)$-improviser.

\item $(1-\epsilon)/\rho \le E_{\valids} < 1/\rho$ and $E_{\improvs} = \infty$: Then $(1-\epsilon)/\rho(1+\tau) \le E_{\valids} / (1+\tau) \le |\valids|$, and $|\improvs| = \infty$. We proceed along the same lines as case (\ref{case:pump-i-unif-v}) of Theorem \ref{theorem-dfa-scheme}. We use Lemma \ref{lemma:symb-unif-samp} as in the previous case to generate a probabilistic algorithm $S$ almost-uniformly sampling from $L(\mathcal{\valids})$. Defining $M = \ceil{(1/\rho - E_{\valids})/(1+\tau)}$, we have $\infty = |L(\mathcal{B})| > M \ge 1$. We can use Lemma \ref{lemma:symb-pump-samp} in place of Lemma \ref{lemma-dfa-pump-samp} to get a probabilistic algorithm $S'$ uniformly sampling from a subset of $L(\mathcal{B})$ of size $M$. Let $G$ be a probabilistic algorithm which with probability $\rho E_{\valids}$ executes $S$, and otherwise executes $S'$. Then since $L(\mathcal{\valids}) = \valids$ and $L(\mathcal{B}) = \improvs \setminus \valids$ are disjoint, every word generated by $G$ has probability either at most $(\rho E_{\valids}) \cdot (1+\tau) / |\valids| \le (1+\tau)^2 \rho$ (if it is in $\valids$) or at most $(1 - \rho E_{\valids}) / M = (1 - \rho E_{\valids}) / \ceil{(1/\rho - E_{\valids})/(1+\tau)} = (1 - \rho E_{\valids}) / \ceil{(1 - \rho E_{\valids}) / (1+\tau)\rho} \le (1+\tau)\rho$ (if it is in $\improvs \setminus \valids$). Also $G$ outputs a member of $\valids$ with probability $\rho E_{\valids} \ge 1 - \epsilon$, so $G$ is an $(\epsilon, (1+\tau)^2 \rho)$-improviser and we return it.

\item $(1-\epsilon)/\rho \le E_{\valids} < 1/\rho \le E_{\improvs} < \infty$: Then $(1-\epsilon)/\rho(1+\tau) \le E_{\valids} / (1+\tau) \le |\valids|$, and $1/\rho(1+\tau) \le E_{\improvs} / (1+\tau) \le |\improvs|$. As in the previous case, use Lemma \ref{lemma:symb-unif-samp} to produce a probabilistic algorithm $S$ almost-uniformly sampling from $L(\mathcal{\valids})$. Since $L(\mathcal{\improvs})$ is finite, we can use the same technique to get a probabilistic algorithm $S'$ almost-uniformly sampling from $L(\mathcal{\improvs})$. Let $G$ be a probabilistic algorithm which with probability $1-\epsilon$ executes $S$, and otherwise executes $S'$. Then $G$ generates each $w \in \valids$ with probability at most $[ (1-\epsilon) \cdot (1+\tau) / |\valids| ] + [ \epsilon \cdot (1+\tau) / |\improvs| ] \le (1+\tau)^2 \rho + (1+\tau)^2 \epsilon \rho = (1+\tau)^2 (1+\epsilon) \rho$, and each $w \in \improvs \setminus \valids$ with probability at most $\epsilon \cdot (1+\tau) / |\improvs| \le (1+\tau)^2 \epsilon \rho$. Furthermore $G$ outputs a member of $\valids$ with probability at least $1-\epsilon$, so $G$ is an $(\epsilon, (1+\tau)^2 (1+\epsilon) \rho)$-improviser and we return it.

\item $E_{\improvs} < 1/\rho$ or $E_{\valids} < (1-\epsilon) / \rho$: It is possible that the problem is not $(\epsilon, \rho/(1+\tau))$-feasible, so the procedure returns $\bot$.
\end{enumerate}
In the cases where an almost-uniform generator is used, there is some constant probability that the generator will fail. If that happens, the improviser just runs the generator again: since the failure probability is a fixed constant, so is the expected number of repetitions needed, and thus the expected runtime of the improviser is just multiplied by an overall constant.

Now if $\mathcal{C}$ is $(\epsilon, \rho/(1+\tau))$-feasible, by Theorem \ref{theorem:feasibility} we have $1/\rho \le |\improvs| / (1+\tau) \le E_{\improvs}$ and $(1-\epsilon) / \rho \le |\valids| / (1+\tau) \le E_{\valids}$ with probability at least $1-\delta$. So with probability $1-\delta$ case (\ref{case:symb-infeasible}) does not happen, and the procedure returns an $(\epsilon, (1+\tau) \rho)$-improviser.
\end{proof}

\end{document}